\newtheorem{theorem}{Theorem}
\newtheorem{lemma}{Lemma}
\newtheorem{definition}{Definition}
\def\BibTeX{{\rm B\kern-.05em{\sc i\kern-.025em b}\kern-.08em
    T\kern-.1667em\lower.7ex\hbox{E}\kern-.125emX}}
\begin{document}

\title{ProMIPS: Efficient High-Dimensional $c$-Approximate Maximum Inner Product Search with a Lightweight Index
%*\\
%{\footnotesize \textsuperscript{*}Note: Sub-titles are not captured in Xplore and
%should not be used}
\thanks{* Corresponding author (email: guyu@mail.neu.edu.cn)}
}
\author{\IEEEauthorblockN{Yang Song}
\IEEEauthorblockA{\textit{School of Computer Science and Engineering} \\
\textit{Northeastern University}\\
Shenyang, China \\
ysqyw1994@163.com}
\and
\IEEEauthorblockN{Yu Gu$^{*}$}
\quad\quad\quad\quad\quad\quad\quad\quad\quad\quad\quad\quad\quad\quad\quad\quad\quad\quad\quad\quad\quad\IEEEauthorblockA{\textit{School of Computer Science and Engineering} \\
\textit{Northeastern University}\\
Shenyang, China \\
guyu@mail.neu.edu.cn}
\and
\IEEEauthorblockN{Rui Zhang}
\quad\quad\quad\quad\quad\quad\quad\quad\quad\quad\quad\quad\quad\quad\quad\quad\quad\quad\quad\quad\quad\quad\quad\quad\quad\quad\IEEEauthorblockA{\url{www.ruizhang.info} \\
rui.zhang@ieee.org}
\and
\IEEEauthorblockN{Ge Yu}
\IEEEauthorblockA{\textit{School of Computer Science and Engineering} \\
\textit{Northeastern University}\\
Shenyang, China \\
yuge@mail.neu.edu.cn}
}
\maketitle
\begin{abstract}
Due to the wide applications in recommendation systems, multi-class label prediction and deep learning, the Maximum Inner Product (MIP) search problem has received extensive attention in recent years. Faced with large-scale datasets containing high-dimensional feature vectors, the state-of-the-art LSH-based methods usually require a large number of hash tables or long hash codes to ensure the searching quality, which takes up lots of index space and causes excessive disk page accesses. In this paper, we relax the guarantee of accuracy for efficiency and propose an efficient method for $c$-Approximate Maximum Inner Product ($c$-AMIP) search with a lightweight iDistance index. We project high-dimensional points to low-dimensional ones via 2-stable random projections and derive probability-guaranteed searching conditions, by which the $c$-AMIP results can be guaranteed in accuracy with arbitrary probabilities. To further improve the efficiency, we propose Quick-Probe for quickly determining the searching bound satisfying the derived condition in advance, avoiding the inefficient incremental searching process. Extensive experimental evaluations on four real datasets demonstrate that our method requires less pre-processing cost including index size and pre-processing time. In addition, compared to the state-of-the-art benchmark methods, it provides superior results on searching quality in terms of overall ratio and recall, and efficiency in terms of page access and running time.
\end{abstract}

\begin{IEEEkeywords}
Maximum Inner Product Search, Probability-Guaranteed, Lightweight Index
\end{IEEEkeywords}

\maketitle
\section{Introduction}\label{sec:introduction}
Given a dataset $D$ of $n$ data points and a query point $q$ in $d$-dimensional space $R^{d}$, a Maximum Inner Product (MIP) search returns the point $o^{*}\in D$ maximizing the inner product with $q$. Mathematically, it is represented as $o^{*}=\arg \max_{o\in D}\langle o,q\rangle$.
The so-called MIP search is a fundamental problem and it has been widely applied in various domain areas, such as matrix factorization based recommendation systems~\cite{DBLP:conf/sigmod/LiCYM17, DBLP:conf/icdm/LiuCTWZ15, DBLP:conf/recsys/BachrachFGKKNP14, DBLP:conf/recsys/CremonesiKT10}, multi-class label prediction~\cite{DBLP:conf/cvpr/DeanRSSVY13} and deep learning~\cite{DBLP:conf/kdd/SpringS17}. Typically, in matrix factorization based recommendation systems, the vectors $q$ and $o$ are viewed as latent features for a user and a product, respectively. The inner product between $q$ and $o$ reflects the user's interest in the product. Therefore, MIP search is an important concern in these recommendation systems to recommend popular products to users.

The phenomenon of the ``Dimensionality Curse" makes exact MIP search in high-dimensional space very expensive. Therefore, many researchers set their sights on the approximate version of the MIP search problem~\cite{DBLP:journals/corr/AuvolatV15, DBLP:conf/recsys/BachrachFGKKNP14, DBLP:conf/aistats/GuoKCS16, DBLP:conf/icml/NeyshaburS15, DBLP:conf/nips/Shrivastava014, DBLP:conf/uai/Shrivastava015, DBLP:journals/corr/VijayanarasimhanSMY14, DBLP:conf/kdd/HuangMFFT18, DBLP:conf/nips/YanLDCC18}, which is called $c$-Approximate MIP ($c$-AMIP) search problem. Mathematically, given an approximation ratio $c$ ($0<c<1$) and a query point $q$, $c$-AMIP search returns a point $o\in D$ such that $\langle o,q\rangle\geq c\langle o^{*},q\rangle$, where $o^{*}$ is the exact MIP point of $q$. In this way, a good accuracy-efficiency trade-off can be provided where the efficiency can be improved significantly while only a small amount of errors occur.

At present, the state-of-the-art methods for $c$-AMIP search are transformation-based. In these methods, a MIP search is converted into a Nearest Neighbor (NN) search or a Maximum Cosine-similarity (MC) search by transforming the given data points and the query point asymmetrically or symmetrically, and employ Locality-Sensitive Hashing (LSH) to solve the NN or MC search problem. These LSH-based methods improve the searching efficiency, but to achieve satisfactory accuracy, they require more hash vectors to project high-dimensional points onto more hash values, indexed by heavyweight structures in terms of massive hash tables. These heavyweight structures require more maintenance overhead, which increases linearly as the number of hash tables increases. Especially in commonly used mobile devices or IoT devices, a huge amount of data will be frequently inserted or deleted in a short time, where the heavyweight index requiring more maintenance overhead may cause delays. Besides, hundreds or thousands of hash tables may also lead to more disk page accesses which degrades the efficiency when storing data points on disks.

Motivated by these existing restrictions, we attempt to design an efficient method for $c$-AMIP search with a lightweight index. A recent method, SRS~\cite{DBLP:journals/pvldb/SunWQZL14}, which can be considered as a special version of LSH technique, projects high-dimensional points onto low-dimensional ones via 2-stable random projections to reduce high-dimensional $c$-ANN search to low-dimensional NN search, and perform the low-dimensional search through a lightweight index in terms of R-tree. Compared to the standard LSH, SRS can directly project high-dimensional points onto lower-dimensional ones with fewer projections, avoiding the heavyweight index. Although it's designed for Euclidean distance, it presents a new angle to solve $c$-AMIP search problem since the Euclidean distance between two points can be computed by their inner product and 2-norms. Even though, it's still challenging to follow the direction of SRS to solve $c$-AMIP search problem. Since inner product isn't a metric measurement, some basic necessary properties such as non-negativity and triangle inequality are not satisfied. Without these properties, we can't derive the probability-guaranteed searching conditions for $c$-AMIP search directly like SRS.

Inspired by SRS, we also project high-dimensional points onto low-dimensional ones via 2-stable random projections. Based on the projection and the properties of inner product, we theoretically derive two conditions specifically for $c$-AMIP search. According to the conditions, we perform an incremental NN search in low-dimensional space to collect the candidate points until a point satisfying either of the conditions is searched. And the required $c$-AMIP point is guaranteed to appear among these candidate points with the given probability. However, during the incremental NN search, every time a point is returned, it is required to determine whether it satisfies the condition, which is a time-consuming procedure. To avoid the procedure, we come up with a quick method named Quick-Probe for directly locating the point satisfying the searching condition to determine the searching range, which enables us to replace the incremental NN search with a range search without testing each returned NN point. Meanwhile, based on Quick-Probe, we can also compute an optimized projected dimension to pursue a more efficient searching process. With respect to the index used for search, since the optimized dimensions are usually greater than 3, R-tree used in SRS isn't applicable. Hence, in order to search in higher-dimensional space, we adopt iDistance~\cite{DBLP:journals/tods/JagadishOTYZ05}, which is an efficient index, and design a new partition pattern for it. Compared to LSH-based methods, iDistance is a typical lightweight index, which only requires a single B+-tree to orderly organize points on disks, rather than a large number of hash tables or long hash codes.

As can be seen from the above descriptions, we propose an efficient method for the probability-guaranteed high-dimensional $c$-AMIP search with a lightweight index. Our contributions are summarized as follows:
\begin{itemize}
  \item We employ 2-stable random projections to project high-dimensional points onto low-dimensional points and theoretically derive two searching conditions for $c$-AMIP search. Relying on the conditions, the $c$-AMIP result can be guaranteed in accuracy with arbitrary probabilities.
  \item Quick-Probe is proposed for quickly locating the point to determine the searching range, which avoids testing each returned point repeatedly to accelerate the searching process. Besides, an optimized projected dimension can be computed based on Quick-Probe.
  \item Extensive experimental evaluations on four real datasets show that our method occupies a smaller index size and requires less pre-processing time compared to benchmark methods. Furthermore, our method is also superior in accuracy measured by overall ratio and recall, and efficiency measured by page access and running time.
\end{itemize}

The rest of the paper is structured as follows. Section~\ref{sec:preliminaries} presents the preliminaries. We introduce the overall framework in Section~\ref{sec:overallframework}. The searching conditions are presented in Section~\ref{sec:searchingconditions}. We propose Quick-Probe in Section~\ref{sec:quickprobe}. In Section~\ref{sec:index}, we describe the indexing technique. The time and space complexities are theoretically analyzed in Section~\ref{sec:complexities}. Experimental evaluations are discussed in Section~\ref{sec:experimentalevaluations}. The related works are introduced in Section~\ref{sec:relatedwork}. Finally, we conclude our work in Section~\ref{sec:conclusion}.

\section{Preliminaries}\label{sec:preliminaries}
\subsection{Problem Definition}
Given a dataset $D$ containing $n$ data points in a $d$-dimensional space $R^{d}$, the inner product $\langle o,q\rangle$ between two points $o=(o_{1},o_{2},...,o_{d})$ and $q=(q_{1},q_{2},...,q_{d})$ can be computed as $\langle o,q\rangle=\Sigma_{i=1}^{d} o_{i}q_{i}$. Inner product is widely used in real applications where the MIP search problem plays an important role. For example, in recommender systems, $o$ and $q$ are used as a user vector and an item vector, respectively. A higher inner product between $o$ and $q$ indicates that the item better suits the user's preference~\cite{DBLP:conf/recsys/BachrachFGKKNP14}.

In this paper, to handle the high-dimensional cases, we allow a trade-off between accuracy and efficiency, and focus on $c$-AMIP search problem formally defined as follows:
\begin{definition}[$c$-AMIP search problem]\label{Definition 1}
Given a query point $q\in R^{d}$ and an approximation ratio $c$ ($0<c<1$), $c$-AMIP search is to find a point $o\in D$ such that $\langle o,q\rangle\geq c\langle o^{*},q\rangle$, where $o^{*}$ is the $q$'s exact MIP point in $D$.
\end{definition}

Similarly, $c$-$k$-AMIP search is to find $k$ points $o_{i}\in D$ ($1\leq i\leq k$) such that $\langle o_{i},q\rangle\geq c\langle o_{i}^{*},q\rangle$, where $o_{i}^{*}$ is the $i^{th}$ exact MIP point of $q$ in $D$.

\subsection{2-Stable Random Projection}
\begin{definition}[2-Stable Random Projections]\label{Definition2}
Given a $d$-dimensional point $o$, which can be considered as a vector $\overrightarrow{o}$, and a $d$-dimensional vector $\overrightarrow{v}$, whose entries are i.i.d. random variables following the standard normal distribution $N(0,1)$, 2-Stable Random Projections is to compute $f(o)=\overrightarrow{v}\cdot \overrightarrow{o}$.
\end{definition}

Based on 2-stable random projections, we can obtain the following Lemma~\cite{DBLP:conf/soda/Panigrahy06}.

\begin{lemma}\label{Lemma1}
For any $o_{1},o_{2}\in R^{d}$, $f(o_{1})-f(o_{2})$ follows the normal distribution $N(0,dis^{2}(o_{1},o_{2}))$.
\end{lemma}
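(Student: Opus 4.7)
The plan is to exploit linearity of the projection together with the 2-stability of the standard Gaussian. By Definition~\ref{Definition2}, $f(o_{1})-f(o_{2})=\vec{v}\cdot\vec{o_{1}}-\vec{v}\cdot\vec{o_{2}}=\vec{v}\cdot(\vec{o_{1}}-\vec{o_{2}})=\sum_{i=1}^{d}v_{i}(o_{1,i}-o_{2,i})$, so the random quantity under consideration is a fixed linear combination (with coefficients $a_{i}:=o_{1,i}-o_{2,i}$) of independent $N(0,1)$ variables $v_{i}$.

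Next I would invoke the well-known fact that any linear combination $\sum_{i}a_{i}v_{i}$ of independent Gaussians is itself Gaussian, and compute its two parameters directly. The mean is $\sum_{i}a_{i}\cdot\mathbb{E}[v_{i}]=0$, and by independence the variance is $\sum_{i}a_{i}^{2}\cdot\mathrm{Var}[v_{i}]=\sum_{i=1}^{d}(o_{1,i}-o_{2,i})^{2}=\|\vec{o_{1}}-\vec{o_{2}}\|_{2}^{2}=dis^{2}(o_{1},o_{2})$. Hence $f(o_{1})-f(o_{2})\sim N(0,dis^{2}(o_{1},o_{2}))$ as claimed.

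There is essentially no serious obstacle; the only step that is not pure arithmetic is the appeal to the stability of the Gaussian family. If a self-contained argument is desired, I would verify it through characteristic functions: using $\mathbb{E}[e^{itv_{i}}]=e^{-t^{2}/2}$ and independence, one obtains $\mathbb{E}\!\left[e^{it\sum_{i}a_{i}v_{i}}\right]=\prod_{i=1}^{d}e^{-a_{i}^{2}t^{2}/2}=e^{-(\sum_{i}a_{i}^{2})t^{2}/2}$, which is precisely the characteristic function of $N(0,\sum_{i}a_{i}^{2})=N(0,dis^{2}(o_{1},o_{2}))$, and uniqueness of characteristic functions finishes the argument.
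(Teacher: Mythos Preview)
Your argument is correct and is the standard way to establish this fact: linearity gives $f(o_{1})-f(o_{2})=\sum_{i}v_{i}(o_{1,i}-o_{2,i})$, and then 2-stability (or the direct mean/variance computation you give, backed by the characteristic-function check) yields the claimed $N(0,dis^{2}(o_{1},o_{2}))$ law.

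Note, however, that the paper does not actually supply its own proof of Lemma~\ref{Lemma1}; it simply states the lemma and attributes it to a prior reference. So there is no ``paper's proof'' to compare against beyond the implicit appeal to the 2-stability property in Definition~\ref{Definition2}. Your write-up is exactly the kind of short self-contained justification one would expect to fill that citation, and nothing in it is in tension with how the paper subsequently uses the lemma (in particular, in the proof of Lemma~\ref{Lemma2}).
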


In our method, the projected dimension of each point is $m$. Therefore, we perform $m$ 2-stable random projections to obtain $m$-dimensional projected points.

\subsection{IDistance}
IDistance is an efficient index based on B+-tree for the exact similarity search~\cite{DBLP:journals/tods/JagadishOTYZ05}, which is illustrated in Fig.~\ref{IDistance}. In iDistance, the whole indexing space is divided into several partitions centered at their reference points. In these partitions, points are transformed into a single dimensional value based on their distances to their corresponding reference points and these values are indexed by a B+-tree. Based on the B+-tree, similarity search can be performed. For example, in Fig.~\ref{IDistance}, given a query point and a searching radius, the grey area in the B+-tree will be searched, so that the points in the gray areas of the space are fetched for determining the final searching results. In this paper, we utilize iDistance as the index to accelerate the searching process.

\begin{figure}\centering
\includegraphics[width=0.45\textwidth]{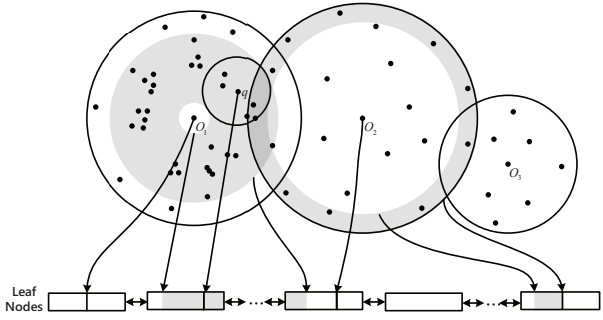}
\caption{IDistance}\label{IDistance}
\vspace{-15pt}
\end{figure}

We summarize the frequently-used symbols in Table~\ref{Notations}.
\begin{table}\centering
\caption{Frequently used symbols}
\label{Notations}
\begin{tabular}{ll}
\hline\noalign{\smallskip}
Symbol & Explanation \\
\noalign{\smallskip}\hline\noalign{\smallskip}
$D$ & dataset \\
$n$ & number of data points \\
$o$, $q$ & data point, query point \\
$P(o)$, $P(q)$ & projected data point, projected query point \\
$d$ & original dimensionality of each point \\
$m$ & projected dimensionality of each point \\
$o^{*}$, $o_{i}^{*}$ & the MIP point, the $i$-th MIP point of $q$ \\
$N(a,b)$ & the normal distribution with mean $a$ \\
         & and variance $b$ \\
$dis(o_{1},o_{2})$ & the Euclidean distance between $o_{1}$ and $o_{2}$ \\
$\|o\|$ & the norm of point $o$ \\
$o_{M}$ & the point with the maximum norm \\
        & in the original space \\
$\langle o_{1},o_{2}\rangle$ & the inner product between $o_{1}$ and $o_{2}$ \\
$\chi^{2}(m)$ & the chi-square distribution with $m$ degrees \\
              & of freedom \\
$\Psi_{m}(x)$ & cumulative distribution function of $\chi^{2}(m)$ \\
$\Psi_{m}^{-1}(p)$ & inverse function of $\Psi_{m}(x)$ \\
$k$ & number of the returned points \\
$c$ & approximation ratio \\
$p$ & guaranteed probability \\
\noalign{\smallskip}\hline
\end{tabular}
\vspace{-20pt}
\end{table}

\section{Overall Framework}\label{sec:overallframework}
In this paper, to solve the probability-guaranteed $c$-AMIP search problem in high-dimensional space, we project high-dimensional points onto low-dimensional ones via 2-stable random projections. Since the ratio of points' Euclidean distance in high-dimensional space and low-dimensional space follows the chi-square distribution, and points' Euclidean distance is related to their inner product, we can derive two searching conditions. Based on these conditions, we perform incremental NN search in low-dimensional space for the probability-guaranteed $c$-AMIP point. In detail, every time a point is returned, we test if the point satisfies either of the conditions to determine whether to terminate the incremental NN search. If satisfied, the $c$-AMIP point exists in the searched points with the given probability at least. The searching conditions are elaborated in Section~\ref{sec:searchingconditions}.

However, it is time-consuming to perform the incremental NN search and test each returned point one by one. To avoid it, we propose a method to quickly locate the point satisfying the searching condition, called Quick-Probe. The method quickly locates the point through binary transformation and data norm's properties. In this way, the searching range is directly determined by the located point and we can perform range search instead of the incremental NN search elaborated in Section~\ref{sec:searchingconditions}, to collect the candidate points. Benefiting from Quick-Probe, we no longer do any incremental NN search and the time-consuming process of testing the returned point one by one can be avoided. Quick-Probe is elaborated in Section~\ref{sec:quickprobe}. In addition, we adopt iDistance as the index and design a new partition pattern for it for performing searching tasks more efficiently in low-dimensional space, which is elaborated in Section~\ref{sec:index}.

Based on the searching conditions and Quick-Probe, our method is described in two parts including the pre-process and the searching process. In the pre-process, the original high-dimensional points are projected onto projected low-dimensional ones. In the low-dimensional space, the index structure is constructed for performing the searching tasks and the low-dimensional points and their corresponding high-dimensional ones are organized on disks. In addition, the projected points are also converted into binary codes and each point's norms are computed, for determining the searching range according to Quick-Probe. In the searching process, Quick-Probe is applied to find the point satisfying the condition and determine the searching range, by which the range search is performed in the projected space to find the candidate points. These candidate points are verified using their inner products in the original space for returning the $c$-AMIP search results. To clearly summarize our method's overall framework, we give Fig.~\ref{overview} to describe it.

\begin{figure}\centering
\includegraphics[width=0.5\textwidth]{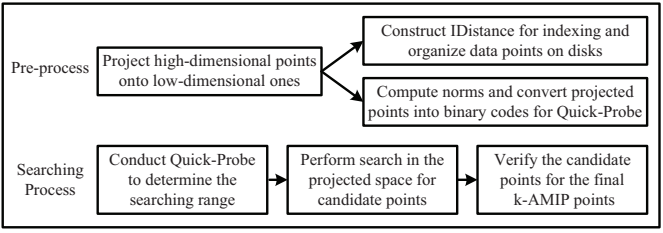}
\caption{Overall Framework} \label{overview}
\vspace{-15pt}
\end{figure}

\section{Searching Conditions}\label{sec:searchingconditions}
Our method aims to guarantee the $c$-AMIP search in accuracy with arbitrary probabilities by proposing two searching conditions. In this section, we will introduce the conditions and prove their validity.

\subsection{Condition A}
As stated above, we perform incremental NN search in the projected space. During the searching process, we fetch every returned point as the candidate points. If the current returned $P(q)$'s $i$-th NN point $P(o_{i})$ satisfies:
\begin{equation}\label{Condition1}
{\|o_{M}\|}^{2}+{\|q\|}^{2}-\frac{2{\langle o_{i},q\rangle}}{c}\leq 0,
\end{equation}
a $c$-AMIP point must exist among these candidate points, and the searching process can be terminated, where $o$ and $q$ are the corresponding original points of $P(o)$ and $P(q)$, $o_{M}$ is the point with the maximum norm in the original space. Formula~\ref{Condition1} is considered as Condition A and the following Theorem~\ref{Theorem1} proves its validity.

\begin{theorem}\label{Theorem1}
If the current returned NN point satisfies Formula~\ref{Condition1}, a $c$-AMIP point must exist among the points having been returned.
\end{theorem}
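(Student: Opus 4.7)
The plan is to show that the current returned point $o_i$ is itself a $c$-AMIP point whenever Condition A holds, which immediately implies that a $c$-AMIP point lies among the returned candidates. First I would rearrange Formula~\ref{Condition1} to isolate the inner product, obtaining
\begin{equation*}
\langle o_i,q\rangle \;\geq\; \frac{c\bigl(\|o_M\|^2+\|q\|^2\bigr)}{2}.
\end{equation*}
The right-hand side is already very suggestive: the AM--GM inequality applied to $\|o_M\|^2$ and $\|q\|^2$ bounds it below by $c\|o_M\|\|q\|$, which converts a sum-of-squares expression into a product of norms that we can compare to an inner product.

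Next I would bring the exact MIP point $o^*$ into the chain. Since $o_M$ is defined as the point of maximum norm in $D$, we have $\|o^*\|\leq\|o_M\|$, so $c\|o_M\|\|q\|\geq c\|o^*\|\|q\|$. Finally the Cauchy--Schwarz inequality yields $\|o^*\|\|q\|\geq\langle o^*,q\rangle$ (using the standard MIP assumption $\langle o^*,q\rangle\geq 0$). Concatenating the four inequalities gives $\langle o_i,q\rangle\geq c\langle o^*,q\rangle$, so $o_i$ satisfies the $c$-AMIP requirement of Definition~\ref{Definition 1}.

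The argument is fully deterministic and does not need the probabilistic machinery of Lemma~\ref{Lemma1}, nor does it use the fact that $o_i$ was returned as a nearest neighbor in the projected space: Condition A is designed as a sufficient termination test that certifies the currently inspected point on its own. The only conceptual hurdle is recognizing the right way to relax $\langle o^*,q\rangle$ into an expression involving $\|o_M\|^2+\|q\|^2$; once the AM--GM/Cauchy--Schwarz chain is identified, the proof is a short algebraic verification. The probabilistic guarantee advertised in the overall framework will instead be shouldered by the companion Condition B, so I do not expect any serious obstacle in establishing Theorem~\ref{Theorem1} itself.
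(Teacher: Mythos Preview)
Your argument is correct and reaches the same conclusion as the paper, but the route differs slightly. After rearranging Formula~\ref{Condition1}, the paper first replaces $\|o_M\|$ by $\|o^{*}\|$ (using the maximum-norm property) and then invokes the single inequality $\|o^{*}-q\|^{2}\geq 0$, i.e.\ $\|o^{*}\|^{2}+\|q\|^{2}\geq 2\langle o^{*},q\rangle$, to finish. You instead keep $\|o_M\|$ in place, apply AM--GM to obtain $\|o_M\|^{2}+\|q\|^{2}\geq 2\|o_M\|\|q\|$, then use $\|o_M\|\geq\|o^{*}\|$ followed by Cauchy--Schwarz. Both chains collapse $\|o_M\|^{2}+\|q\|^{2}$ down to $2\langle o^{*},q\rangle$; the paper's version is one step shorter and avoids naming two separate classical inequalities, while your decomposition makes the provenance of each estimate more explicit. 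As a minor remark, the non-negativity assumption on $\langle o^{*},q\rangle$ that you flag is not actually needed: Cauchy--Schwarz already gives $\|o^{*}\|\|q\|\geq |\langle o^{*},q\rangle|\geq \langle o^{*},q\rangle$ unconditionally, and the paper's $\|o^{*}-q\|^{2}\geq 0$ argument likewise holds without any sign hypothesis.
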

\begin{proof}
We assume that $o^{*}$ is the exact MIP point of the query point $q$. If ${\|o_{M}\|}^{2}+{\|q\|}^{2}-\frac{2{\langle o_{i},q\rangle}}{c}\leq 0$, since $o_{M}$ is the point with the maximum norm, we have ${\|o^{*}\|}^{2}+{\|q\|}^{2}-\frac{2{\langle o_{i},q\rangle}}{c}\leq 0$.

Since ${\|o^{*}\|}^{2}+{\|q\|}^{2}-2{\langle o^{*},q\rangle}\geq 0$, we have $\langle o_{i},q\rangle\geq c\langle o^{*},q\rangle$.
Therefore, when ${\|o_{M}\|}^{2}+{\|q\|}^{2}-\frac{2{\langle o_{i},q\rangle}}{c}\leq 0$, a $c$-AMIP point must have been accessed when $o_{i}$ is searched.
\end{proof}

\subsection{Condition B}
During the incremental NN search in the projected space, if the current returned NN point doesn't satisfy Condition A, ${\|o_{M}\|}^{2}+{\|q\|}^{2}-\frac{2{\langle o_{i},q\rangle}}{c}>0$ is true. Based on it, we turn to test the returned NN point by the following Formula~\ref{Condition2}. If $P(q)$'s $i$-th NN point $P(o_{i})$ satisfies:
\begin{equation}\label{Condition2}
\Psi_{m}(\frac{dis^{2}(P(o_{i}),P(q))}{{\|o_{M}\|}^{2}+{\|q\|}^{2}-\frac{2{\langle o_{max},q\rangle}}{c}})\geq p,
\end{equation}
a $c$-AMIP point must exist among these candidate points with the given probability $p$ at least, and the searching process can be terminated, where $o_{max}$ is the point with the maximum inner product to $q$ among all the candidate points having been returned so far. Formula~\ref{Condition2} is considered as Condition B and the following Theorem~\ref{Theorem2} proves its validity.

Before proving Theorem~\ref{Theorem2}, we firstly give the following Lemma~\ref{Lemma2} as the preparations.

\begin{lemma}\label{Lemma2}
$\frac{dis^{2}(P(o),P(q))}{{\|o\|}^{2}+{\|q\|}^{2}-2{\langle o,q\rangle}}$ follows the $\chi^{2}(m)$ distribution.
\end{lemma}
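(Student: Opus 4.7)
The plan is to recognize that the denominator $\|o\|^{2}+\|q\|^{2}-2\langle o,q\rangle$ is simply the expanded form of $dis^{2}(o,q)$, so the claim reduces to showing $dis^{2}(P(o),P(q))/dis^{2}(o,q) \sim \chi^{2}(m)$. This reframing makes it clear that the lemma is essentially a coordinate-wise application of Lemma~\ref{Lemma1} combined with the definition of the chi-square distribution.

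First I would write $P(o)=(f_{1}(o),\dots,f_{m}(o))$ and $P(q)=(f_{1}(q),\dots,f_{m}(q))$, where each coordinate $f_{j}$ is produced by an independent random vector $\overrightarrow{v}_{j}$ whose entries are i.i.d.\ $N(0,1)$. Then $dis^{2}(P(o),P(q))=\sum_{j=1}^{m}(f_{j}(o)-f_{j}(q))^{2}$. By Lemma~\ref{Lemma1}, each difference $f_{j}(o)-f_{j}(q)$ follows $N(0,dis^{2}(o,q))$, and the independence of the $\overrightarrow{v}_{j}$'s across $j$ makes these $m$ differences mutually independent.

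Next I would normalize: setting $Z_{j}=(f_{j}(o)-f_{j}(q))/dis(o,q)$ gives $m$ i.i.d.\ $N(0,1)$ variables (assuming $o\neq q$ so the denominator is nonzero; the degenerate case $o=q$ can be noted separately as the ratio being $0/0$ and irrelevant to the searching condition). Summing the squares,
\begin{equation}
\frac{dis^{2}(P(o),P(q))}{dis^{2}(o,q)}=\sum_{j=1}^{m}Z_{j}^{2},
\end{equation}
which by the very definition of the chi-square distribution is $\chi^{2}(m)$. Finally I would rewrite $dis^{2}(o,q)=\|o\|^{2}+\|q\|^{2}-2\langle o,q\rangle$ to recover the exact form in the lemma statement.

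There is no real obstacle here: everything is a direct consequence of Lemma~\ref{Lemma1}, the independence of the $m$ random projection vectors, and the textbook definition of $\chi^{2}(m)$. The only point that warrants care is being explicit that the $m$ projection vectors are drawn independently (so the $Z_{j}$'s are jointly independent, not merely marginally normal), since Lemma~\ref{Lemma1} as stated only addresses a single projection.
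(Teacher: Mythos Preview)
Your proposal is correct and follows essentially the same route as the paper: write $P(o)$ and $P(q)$ coordinatewise, apply Lemma~\ref{Lemma1} to each coordinate to get standardized normals $(f_{j}(o)-f_{j}(q))/dis(o,q)$, sum the squares to obtain a $\chi^{2}(m)$ variable, and then expand $dis^{2}(o,q)=\|o\|^{2}+\|q\|^{2}-2\langle o,q\rangle$. Your explicit remark that the $m$ projection vectors must be independent (so the $Z_{j}$'s are jointly i.i.d.) is a welcome point of rigor that the paper leaves implicit.
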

\begin{proof}
According to Definition~\ref{Definition2}, we select $m$ $d$-dimensional vectors  whose entries are i.i.d random variables following $N(0,1)$ for performing $m$ 2-stable random projections to get $m$-dimensional projected points. The $m$-dimensional projected points are denoted as $P(o)=(f_{1}(o),f_{2}(o),...,f_{m}(o))$ and $P(q)=(f_{1}(q),f_{2}(q),...,f_{m}(q))$.

According to Lemma~\ref{Lemma1}, we have $\frac{f_{i}(o)-f_{i}(q)}{dis(o,q)}\sim N(0,1)$ $(1\leq i\leq m)$.

Therefore, we can obtain $\sum_{i=1}^{m}{(\frac{f_{i}(o)-f_{i}(q)}{dis(o,q)})}^{2}\sim \chi^{2}(m)$.

Since $dis^{2}(P(o),P(q))=\sum_{i=1}^{m} ({f_{i}(o)-f_{i}(q)})^{2}$ and $dis^{2}(o,q)={\|o\|}^{2}+{\|q\|}^{2}-2{\langle o,q\rangle}$, the lemma can be proved.
\end{proof}

\begin{theorem}\label{Theorem2}
If the current returned NN point satisfies Formula~\ref{Condition2}, a $c$-AMIP point must exist among the points having been returned with probability $p$ at least.
\end{theorem}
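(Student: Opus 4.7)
The plan is to prove the contrapositive-style statement: bound the probability that \emph{no} $c$-AMIP point lies among the returned candidates, and show this probability is at most $1-p$. So I start by assuming that none of the candidates returned so far is a $c$-AMIP point, which is equivalent to $\langle o_{max},q\rangle<c\langle o^{*},q\rangle$. In particular the exact MIP point $o^{*}$ itself has not yet been returned (otherwise $o^{*}$ would trivially be a $c$-AMIP point, since $0<c<1$). Because the low-dimensional search is an incremental NN scan that has just produced $P(o_{i})$, every point not yet returned has projected distance to $P(q)$ no smaller than $dis(P(o_{i}),P(q))$. Thus
\begin{equation*}
dis^{2}(P(o^{*}),P(q))\;\geq\;dis^{2}(P(o_{i}),P(q)).
\end{equation*}

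Next I invoke Lemma~\ref{Lemma2} with $o=o^{*}$: the random variable
\begin{equation*}
X\;=\;\frac{dis^{2}(P(o^{*}),P(q))}{\|o^{*}\|^{2}+\|q\|^{2}-2\langle o^{*},q\rangle}
\end{equation*}
follows the $\chi^{2}(m)$ distribution. The key algebraic step is then to replace the denominator of $X$ by the (larger) quantity $\|o_{M}\|^{2}+\|q\|^{2}-\frac{2\langle o_{max},q\rangle}{c}$ appearing in Condition~B. This replacement is legitimate because $\|o^{*}\|\leq\|o_{M}\|$ by definition of $o_{M}$, and because the failure assumption $\langle o_{max},q\rangle<c\langle o^{*},q\rangle$ gives $-2\langle o^{*},q\rangle\leq -\tfrac{2}{c}\langle o_{max},q\rangle$; adding these two inequalities yields the denominator comparison that is needed.

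Combining the distance inequality with the denominator comparison produces
\begin{equation*}
X\;\geq\;\frac{dis^{2}(P(o_{i}),P(q))}{\|o_{M}\|^{2}+\|q\|^{2}-\frac{2\langle o_{max},q\rangle}{c}}.
\end{equation*}
The probability of this event is exactly $1-\Psi_{m}$ evaluated at the right-hand side, and Condition~B asserts that this value is at most $1-p$. Hence the failure probability is at most $1-p$, so with probability at least $p$ at least one returned point is a $c$-AMIP point, which is what the theorem claims.

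The main obstacle is the denominator manipulation in the middle step: the quantity $\|o^{*}\|^{2}+\|q\|^{2}-2\langle o^{*},q\rangle$ (which naturally appears from Lemma~\ref{Lemma2}) and the quantity $\|o_{M}\|^{2}+\|q\|^{2}-\frac{2\langle o_{max},q\rangle}{c}$ (which is what the algorithm can actually compute) must be linked, and the link is only valid under the failure hypothesis. A secondary subtlety is ensuring the denominator in Condition~B is positive, so that $\Psi_{m}$ is being applied to a meaningful argument; this is guaranteed because if it were nonpositive, Condition~A would already have fired on $o_{max}$ and the search would have terminated before Condition~B was tested.
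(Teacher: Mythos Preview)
Your proposal is correct and mirrors the paper's proof: both invoke Lemma~\ref{Lemma2} for $o^{*}$, enlarge the denominator via $\|o_{M}\|\geq\|o^{*}\|$ together with the failure hypothesis $\langle o_{max},q\rangle<c\langle o^{*},q\rangle$, and conclude from monotonicity of $\Psi_{m}$. The only cosmetic difference is that the paper frames the argument as an explicit two-case split on whether $dis(P(o^{*}),P(q))\leq dis(P(o_{i}),P(q))$, whereas you take the contrapositive directly; your remark on denominator positivity via Condition~A is a nice addition that the paper leaves implicit.
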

\begin{proof}
Assume that $o^{*}$ is the exact MIP point, we consider the relationship between $dis(P(o^{*}),P(q))$ and $dis(P(o_{i}),P(q))$. We discuss their relationship in two cases:
\begin{itemize}
  \item \textbf{C1:} $dis(P(o^{*}),P(q))\leq dis(P(o_{i}),P(q))$.

  In this case, since we perform the incremental NN search, $o^{*}$ must have been accessed when $o_{i}$ is searched.

  \item \textbf{C2:} $dis(P(o^{*}),P(q))> dis(P(o_{i}),P(q))$.

  In this case, our method may produce incorrect results if none of the $c$-AMIP points has appeared so far, which can also be represented as $\langle o_{max},q\rangle<c\cdot \langle o^{*},q\rangle$. However, we can prove that the probability of such incorrect case is less than $1-p$.

  According to Lemma~\ref{Lemma2}, for any $x>0$ and $o$, we have
  $$Pr[dis(P(o),P(q))\leq x]=\Psi_{m}(\frac{x^{2}}{{\|o\|}^{2}+{\|q\|}^{2}-2{\langle o,q\rangle}}).$$

  Based on it, we have:
  \begin{align*}
  &\quad Pr[dis(P(o^{*}),P(q))> dis(P(o_{i}),P(q))]\\
  &=1-\Psi_{m}(\frac{dis^{2}(P(o_{i}),P(q))}{\|o^{*}\|^{2}+\|q\|^{2}-2\langle o^{*},q\rangle}).
  \end{align*}
  Since $\langle o_{max},q\rangle<c\cdot \langle o^{*},q\rangle$, we can derive
  \begin{align*}
  &\quad \Psi_{m}(\frac{dis^{2}(P(o_{i}),P(q))}{\|o^{*}\|^{2}+\|q\|^{2}-2\langle o^{*},q\rangle})\\
  &>\Psi_{m}(\frac{dis^{2}(P(o_{i}),P(q))}{\|o^{*}\|^{2}+\|q\|^{2}-\frac{2\langle o_{max},q\rangle}{c}}).
  \end{align*}

  Since $o_{M}$ is the point with the maximum norm, we have
  \begin{align*}
  &\quad \Psi_{m}(\frac{dis^{2}(P(o_{i}),P(q))}{\|o^{*}\|^{2}+\|q\|^{2}-\frac{2\langle o_{max},q\rangle}{c}})\\
  &>\Psi_{m}(\frac{dis^{2}(P(o_{i}),P(q))}{\|o_{M}\|^{2}+\|q\|^{2}-\frac{2\langle o_{max},q\rangle}{c}}).
  \end{align*}

  Therefore, if $o_{i}$ satisfies $\Psi_{m}(\frac{dis^{2}(P(o_{i}),P(q))}{{\|o_{M}\|}^{2}+{\|q\|}^{2}-\frac{2{\langle o_{max},q\rangle}}{c}})\geq p$, we have $Pr[dis(P(o^{*}),P(q))> dis(P(o_{i}),P(q))]\leq 1-p$.
\end{itemize}
\end{proof}

Algorithm~\ref{MIPSearch} gives the pseudo-code of the searching process. The algorithm can also be extended to solve the $c$-$k$-MIP search problem by some simple changes. In Condition A, it's required to test the current $k$-th MIP point $o_{max}^{k}$. Similarly, we should use $o_{max}^{k}$ in Condition B instead of $o_{max}$.

\begin{algorithm}
\caption{\textsf{MIP-Search-\uppercase\expandafter{\romannumeral1}} ($D,n,c,p,q$)}\label{MIPSearch}
\LinesNumbered
$o_{max}\leftarrow Null$;\\
$i\leftarrow 1$;\\
// Perform incremental NN search\\
\While{$i\leq n$}{
$P(o_{i})\leftarrow$ $P(q)$'s $i$-NN point;\\
\If{$\langle o_{max},q\rangle\leq \langle o_{i},q\rangle$}{
$o_{max}\leftarrow o_{i}$; // Update MIP point\\
}

\If{Condition A}{
\Return $o_{max}$;
}
\ElseIf{Condition B}{
\Return $o_{max}$;
}
$i\leftarrow i+1$;\\
}
\Return $o_{max}$;
\end{algorithm}
\vspace{-12pt}
\section{Quick-Probe}\label{sec:quickprobe}
As can be seen from Algorithm~\ref{MIPSearch}, we have to perform the incremental NN search to find the point satisfying the searching condition. Whenever a point is returned, it's required to test it using Condition A or Condition B. Especially in Condition B, Euclidean distance in the projected space is computed, which is time-consuming when the projected dimension is high. Besides, it also incurs extra page accesses when fetching points from disks. Therefore, we attempt to avoid testing the points one by one.

\subsection{Method}\label{subsec:method}
For the purpose, we introduce a method named Quick-Probe, by which we can quickly locate a point satisfying Condition B as much as possible. The distance between the point and the query in the projected space is used as an estimation of the searching range. It enables us to perform range search instead of incremental NN search to find the candidates points within the searching range.

Nevertheless, it's hard to determine the bound of $\frac{dis^{2}(P(o),P(q))}{\|o_{M}\|^{2}+\|q\|^{2}-\frac{2\langle o_{max},q\rangle}{c}}$ in Condition B and locate a point satisfying the condition. But we observe that the point satisfying Formula~\ref{Condition2a} is more likely to satisfy Condition B and the determined searching range can infinitely approach the range determined by Condition B. So we turn our attention to determine the bound of $\frac{dis^{2}(P(o),P(q))}{c\times dis^{2}(o,q)}$, and locate a point satisfying Formula~\ref{Condition2a}.
\begin{equation}\label{Condition2a}
\Psi_{m}(\frac{dis^{2}(P(o),P(q))}{c\times dis^{2}(o,q)})\geq p
\end{equation}

The bound is determined through binary transformation and data norm's properties. In detail, we transform each projected point into a binary code $c(o)=(c_{1}(o),c_{2}(o),...,c_{m}(o))$, where $c_{i}(o)=1$ if $P_{i}(o)$ is non-negative and $c_{i}(o)=0$ otherwise ($i=1,2,...,m$). According to Theorem~\ref{Theorem3}, we can derive the lower bound of $dis(P(o),P(q))$. The upper bound of $dis(o,q)$ can be derived through Theorem~\ref{Theorem4} using the property of data norm. By Theorem~\ref{Theorem3} and Theorem~\ref{Theorem4}, a lower bound of $\frac{dis^{2}(P(o),P(q))}{c\times dis^{2}(o,q)}$ can be computed. If the lower bound referring to a point $o$ is greater than $\Psi_{m}^{-1}(p)$, Formula~\ref{Condition2a} must be satisfied. Therefore, we can use $dis(P(o),P(q))$ as the searching range in the projected space. The process of finding $o$ is described as below.

\begin{theorem}\label{Theorem3}
The lower bound of the Euclidean distance between $P(o)$ and $P(q)$ is $\frac{1}{\sqrt{m}}\sum_{i=1}^{m}(c_{i}(o)\oplus c_{i}(q))\times |P_{i}(q)|$.
\end{theorem}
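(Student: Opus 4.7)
The plan is to bound $dis^{2}(P(o),P(q))$ from below by a sum restricted to the ``sign-disagreement'' coordinates, and then convert that quadratic bound into the claimed linear one via Cauchy--Schwarz. Concretely, let $S = \{\, i : c_{i}(o)\oplus c_{i}(q)=1 \,\}$ be the set of indices where $P_{i}(o)$ and $P_{i}(q)$ lie on opposite sides of $0$, so that the sum $\sum_{i=1}^{m}(c_{i}(o)\oplus c_{i}(q))\,|P_{i}(q)|$ in the target expression equals $\sum_{i\in S}|P_{i}(q)|$.

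First, I would discard the non-negative contributions from coordinates outside $S$: $dis^{2}(P(o),P(q)) = \sum_{i=1}^{m}(P_{i}(o)-P_{i}(q))^{2} \ge \sum_{i\in S}(P_{i}(o)-P_{i}(q))^{2}$. Then, for each $i\in S$ the scalars $P_{i}(o)$ and $P_{i}(q)$ have opposite signs, so $|P_{i}(o)-P_{i}(q)| = |P_{i}(o)|+|P_{i}(q)| \ge |P_{i}(q)|$, giving coordinate-wise $(P_{i}(o)-P_{i}(q))^{2} \ge P_{i}(q)^{2}$. Summing yields the intermediate $\ell_{2}$ bound $dis^{2}(P(o),P(q)) \ge \sum_{i\in S} P_{i}(q)^{2}$.

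Second, I would apply Cauchy--Schwarz to trade the $\ell_{2}$-form for the $\ell_{1}$-form that the statement prescribes: $\bigl(\sum_{i\in S}|P_{i}(q)|\bigr)^{2} \le |S|\cdot\sum_{i\in S} P_{i}(q)^{2} \le m\cdot\sum_{i\in S} P_{i}(q)^{2}$, using $|S|\le m$. Rearranging and taking square roots produces $\sqrt{\sum_{i\in S} P_{i}(q)^{2}} \ge \tfrac{1}{\sqrt{m}}\sum_{i\in S}|P_{i}(q)|$, and chaining with the previous step gives exactly the advertised bound.

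There is no serious obstacle here; the proof is essentially a two-line calculation. The only mild subtlety is the boundary case where some $P_{i}(o)$ or $P_{i}(q)$ equals $0$: because the definition treats $P_{i}\ge 0$ as $c_{i}=1$, a zero coordinate is assigned to the ``non-negative'' side and either $i\notin S$ or the inequality $|P_{i}(o)-P_{i}(q)|\ge |P_{i}(q)|$ still holds trivially. The $\tfrac{1}{\sqrt{m}}$ factor is unavoidable in this $\ell_{1}$ form; it is the price for obtaining a bound computable directly from the binary codes $c_{i}(o),c_{i}(q)$ and the single-point quantities $|P_{i}(q)|$, which is precisely what Quick-Probe needs when combined with Theorem~\ref{Theorem4}.
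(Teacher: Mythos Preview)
Your proof is correct and uses the same two ingredients as the paper's---the coordinate-wise bound $|P_i(o)-P_i(q)|\ge (c_i(o)\oplus c_i(q))\,|P_i(q)|$ and the $\ell_1$--$\ell_2$ inequality (Cauchy--Schwarz)---merely applied in the opposite order: the paper first invokes $\sqrt{m}\,\|x\|_2\ge\|x\|_1$ on the full vector $P(o)-P(q)$ and then bounds each coordinate, whereas you restrict to $S$ and bound in $\ell_2$ first, then apply Cauchy--Schwarz. A small bonus of your ordering is that it exposes the slightly sharper intermediate bound with $\sqrt{|S|}$ in place of $\sqrt{m}$, though the paper does not exploit this.
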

\begin{proof}
For any $m$-dimensional vector $x$, it holds that $\sqrt{m}\|x\|_{2}\geq \|x\|_{1}$~\cite{zhang2011matrix, DBLP:conf/sigmod/LiYZXCLNC18}. Therefore, we have $\|P(o)-P(q)\|_{2}\geq \frac{1}{\sqrt{m}}\|P(o)-P(q)\|_{1}$. When $c_{i}(o)=c_{i}(q)$, $P_{i}(o)$ and $P_{i}(q)$ have the same sign and $c_{i}(o)\oplus c_{i}(q)=0$ holds. Since $|P_{i}(o)-P_{i}(q)|\geq 0$, we have $|P_{i}(o)-P_{i}(q)|\geq (c_{i}(o)\oplus c_{i}(q))\times |P_{i}(q)|$. When $c_{i}(o)\neq c_{i}(q)$, $P_{i}(o)$ and $P_{i}(q)$ have different signs and $c_{i}(o)\oplus c_{i}(q)=1$ holds. Therefore, we also have $|P_{i}(o)-P_{i}(q)|=|P_{i}(o)|+|P_{i}(q)|\geq (c_{i}(o)\oplus c_{i}(q))\times |P_{i}(q)|$. Therefore, it holds that $|P_{i}(o)-P_{i}(q)|\geq (c_{i}(o)\oplus c_{i}(q))\times |P_{i}(q)|$ and we can obtain that
\begin{equation}\label{equation4}
\begin{split}
\|P(o)-P(q)\|_{2} &\geq \frac{1}{\sqrt{m}}\sum_{i=1}^{m}(c_{i}(o)\oplus c_{i}(q))\times |P_{i}(q)|
\end{split}
\end{equation}
\end{proof}

\begin{theorem}\label{Theorem4}
The upper bound of the Euclidean distance between $o$ and $q$ is $\sum_{i=1}^{m}|o_{i}|+\sum_{i=1}^{m}|q_{i}|$.
\end{theorem}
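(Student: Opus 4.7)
The plan is to bound the Euclidean distance by a sum of coordinatewise absolute values through two elementary steps, each relying on a classical norm inequality applied to the difference vector $o-q$. The tools are exactly the two directions of the $\ell_1$ versus $\ell_2$ comparison: Theorem~\ref{Theorem3} used the bound $\sqrt{m}\,\|x\|_{2}\geq \|x\|_{1}$, and here I will use its companion $\|x\|_{2}\leq \|x\|_{1}$, which is valid for any finite-dimensional real vector since squaring the right-hand side yields $\sum_{i} x_{i}^{2}$ plus nonnegative cross terms.

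First, I would rewrite $dis(o,q)=\|o-q\|_{2}$ and apply $\|o-q\|_{2}\leq \|o-q\|_{1}=\sum_{i}|o_{i}-q_{i}|$. Second, I would invoke the scalar triangle inequality $|o_{i}-q_{i}|\leq |o_{i}|+|q_{i}|$ coordinatewise and sum, obtaining the stated bound $\sum_{i}|o_{i}|+\sum_{i}|q_{i}|$. Chaining the two inequalities gives the theorem in a single line.

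There is no genuine obstacle here; the entire argument is two lines of standard norm manipulation, essentially dual to the one already executed in the proof of Theorem~\ref{Theorem3}. The only point worth flagging is a small notational inconsistency in the statement: since $o,q\in R^{d}$ in the original space (see Table~\ref{Notations}), the summations should naturally run from $1$ to $d$ rather than from $1$ to $m$ as printed. With that correction the argument proceeds verbatim, and the resulting upper bound on $dis^{2}(o,q)$ combines directly with the lower bound on $dis^{2}(P(o),P(q))$ from Theorem~\ref{Theorem3} to give the computable lower estimate of $\frac{dis^{2}(P(o),P(q))}{c\cdot dis^{2}(o,q)}$ that Quick-Probe compares against $\Psi_{m}^{-1}(p)$ in Section~\ref{sec:quickprobe}.
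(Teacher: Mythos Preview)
Your proof is correct and matches the paper's own argument essentially line for line: the paper also chains $\|o-q\|_{2}\leq\|o-q\|_{1}$ with the coordinatewise triangle inequality to reach $\|o\|_{1}+\|q\|_{1}$. Your remark about the $m$ versus $d$ indexing is a fair observation on the paper's notation, but it does not affect the argument.
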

\begin{proof}
According to the property of vector norm and absolute value equality~\cite{zhang2011matrix}, we can simply derive:
\begin{equation}\label{equation5}
\|o-q\|_{2} \leq \|o-q\|_{1} \leq \sum_{i=1}^{m}|o_{i}|+\sum_{i=1}^{m}|q_{i}|=\|o\|_{1}+\|q\|_{1}.
\end{equation}
\end{proof}

In the pre-process, the projected points with the same binary code will be divided into the same group, and the 1-norms of their original points are computed and sorted. In the searching process, the lower bounds of Euclidean distance between each group and the query point are computed through Formula~\ref{equation4}. We search the groups in ascending order of their lower bounds. In each group, its lower bound is denoted as $LB$ and we fetch the point $o$ whose $\|o\|_{1}$ is the smallest among the points in the group to find the largest value of $\frac{LB^{2}}{c\times (\|o\|_{1}+\|q\|_{1})^{2}}$. Then we test whether it satisfies $\Psi_{m}(\frac{LB^{2}}{c\times (\|o\|_{1}+\|q\|_{1})^{2}})\geq p$, which is denoted as Test A. If Test A is satisfied, we fetch the point to determine the searching range. If not satisfied, we record the point's value of $\frac{LB^{2}}{c\times (\|o\|_{1}+\|q\|_{1})^{2}}$ and continue to search in the next group until the point is found. If there is no point satisfying it in all groups, we fetch the point with the largest recorded value of $\frac{LB^{2}}{c\times (\|o\|_{1}+\|q\|_{1})^{2}}$ as the result. The following Algorithm~\ref{QP} describes the whole process. In Algorithm~\ref{QP}, $G=\{G_{1},G_{2},...,G_{K}\}$ are the input set of groups with the same binary codes. In each group, the points $o$ are sorted in the ascending order of $\|o\|_{1}$.

\begin{algorithm}
\caption{\textsf{Quick-Probe} ($G,c,p,q$)}\label{QP}
\LinesNumbered
Compute each group $G_{i}$'s lower bound $LB_{i}$;\\
$\{GS_{1},GS_{2},...,GS_{K}\}\leftarrow$ the sorted groups in the ascending order of the lower bounds;\\
$point\leftarrow Null$;\\
$value\leftarrow 0$;\\
\For{$i=1$ to $K$}{// Test A\\
\If{$\Psi_{m}(\frac{LB_{i}^{2}}{c\times (\|o_{i1}\|_{1}+\|q\|_{1})^{2}})\geq p$}{
\Return $o_{i1}$;
}

// Update the point with the largest value\\
\If{$\frac{LB_{i}^{2}}{c\times (\|o_{i1}\|_{1}+\|q\|_{1})^{2}}\geq value$}{
$value\leftarrow \frac{LB_{i}^{2}}{c\times (\|o_{i1}\|_{1}+\|q\|_{1})^{2}}$;\\
$point\leftarrow o_{i1}$;\\
}
}
\Return $point$;
\end{algorithm}
%\vspace{-3pt}
Combining Quick-Probe and the aforementioned Condition A and Condition B, the searching process is described in Algorithm~\ref{QPMIPSearch}. Quick-Probe is applied to find the point $o$ as the input to determine the searching range in the projected space. During the range search in the projected space, when a point is returned, its inner product to the query point is recorded for the final verification. Besides, Condition A is also tested to determine whether to terminate the searching process (Unlike Condition B, Condition A doesn't require too much computation).

Because the searching range obtained by Quick-Probe is an estimated value, the obtained point may not satisfy Condition B completely, which indicates that the searching range may not completely guarantee $c$-AMIP point with the given probability $p$. Faced with this problem, we compensate it by expanding the searching range to ensure the probability-guaranteed $c$-AMIP result. If the entire range search has been performed, the recorded maximum inner product is brought into Condition B to test whether the result satisfies the condition. If satisfied, terminate the searching process and return the result. If not satisfied, according to Formula~\ref{Condition2}, the searching range will be extended to $r^{'}=\sqrt{\Psi_{m}^{-1}(p)\times({\|o_{M}\|}^{2}+{\|q\|}^{2}-\frac{2{\langle o_{max},q\rangle}}{c})}$ as compensation to find the final results. Since the obtained maximum inner product later is greater than or equal to current $\langle o_{max},q\rangle$, the extended $r^{'}$ is larger than or equal to the actual searching range satisfying the probability-guaranteed requirements.

\begin{algorithm}
\caption{\textsf{MIP-Search-\uppercase\expandafter{\romannumeral2}} ($D,c,p,q,o$)}\label{QPMIPSearch}
\LinesNumbered
$r\leftarrow dis(P(o),P(q))$; // Determined searching range\\
$o_{max}\leftarrow Null$;\\
$i\leftarrow 0$;\\
// Perform range search\\
\While{$dis(P(o_{i}),P(q))<r$}{
$i\leftarrow i+1$;\\
$o_{i}\leftarrow$ the original form of $P(o_{i})$;\\
\If{$\langle o_{i},q\rangle>\langle o_{max},q\rangle$}{
$o_{max}\leftarrow o_{i}$; // Update MIP point\\
\If{Condition A}{
\Return $o_{max}$;
}
}
}
\If{Condition B}{
\Return $o_{max}$;
}
\Else{
Update the searching range to $r^{'}$;\\
\While{$dis(P(o_{i}),P(q))<r^{'}$}{
$i\leftarrow i+1$;\\
$o_{i}\leftarrow$ the original form of $P(o_{i})$;\\
\If{$\langle o_{i},q\rangle>\langle o_{max},q\rangle$}{
$o_{max}\leftarrow o_{i}$;\\
\If{Condition A}{
\Return $o_{max}$;
}
}
}
}
\Return $o_{max}$;
\end{algorithm}

\subsection{Optimized Projected Dimension}\label{subsec:OPD}
In Quick-Probe, the projected points are transformed into binary codes. It indicates that $m$ projected dimensions will bring $2^{m}$ binary codes. If assuming that each binary code represents the same number of points, $2^{m}$ groups will bring $n/2^{m}$ points in each group. It can be observed that more projected dimensions may bring more groups, while bring fewer points in each group. If the point can be located by directly searching one group, fewer points in one group will lead to less time consumption. However, more groups also require more time to compute their lower bounds. Therefore, there exists a trade-off and we can derive an optimized projected dimension to improve the efficiency of Quick-Probe.

Binary codes with $m$ bits can divide the whole dataset into up to $2^{m}$ groups. The time consumption to compute the groups' lower bounds and find the group with the smallest lower bound is $2^{m}(m+1)$. We assume that the whole dataset can be equally divided and the point satisfying Formula~\ref{Condition2a} can be located by searching only one group. Therefore, each group contains $n/2^{m}$ points and the time consumption of searching the point is $n/2^{m}$. The total time consumption is $2^{m}(m+1)+n/2^{m}$. We set the function $f(m)=2^{m}(m+1)+n/2^{m}$. Since the second derivative of $f(m)$ is greater than 0, $f(m)$ has the minimum value. Our objective is to compute $m=\arg \min f(m)$, which is considered as the optimized projected dimension.

\section{Index Structure}\label{sec:index}
In the standard iDistance shown in Fig.~\ref{IDistance}, when performing range search, the searching area is much larger than the given searching sphere, which indicates that a large portion of searching area is unnecessary.

Different from the standard iDistance, to avoid much unnecessary searching area, we adopt a different partition pattern as shown in Fig.~\ref{ParIDistance}. We use the following Formula~\ref{equation7} to compute each point's index key,
\begin{equation}\label{equation7}
\textit{I}(p)=\lfloor i*C+dis(p,O_{i})/\varepsilon\rfloor
\end{equation}
where $\varepsilon$ is a constant determined by the data distribution. In detail, taking a two-dimensional space as an example, we obtain the clusters' radii after the first stage of clustering and compute their average. Then, we make a circle with the average as the radius denoted as $r_{avg}$, and the value of $\varepsilon$ is equal to $r_{avg}/N_{key}$ to divide the circle into $N_{key}$ rings with equal ring widths, which also means that points can be mapped to $N_{key}$ keys. We continue to employ $k$-means to divide the sets of points in the rings into several sub-partitions, while the clusters' centers and radii are the sub-partitions' pivots and radii, respectively. In the searching process, points can be filtered in sub-partitions by whether they intersect with the given searching sphere. In addition, the points in the same sub-partition can be collectively organized on disks in order, which means that the adjacent points belonging to the same sub-partition are likely to be organized on the same disk, while the adjacent sub-partitions are also likely to be organized on the adjacent disks. It's beneficial to reduce page accesses since points can be read from disks in sub-partitions to avoid random readings. As shown in Fig.~\ref{ParIDistance}, an index key indexes a deep grey ring in the partition. The points in this ring are divided into eight sub-partitions. Given a searching sphere centered at the query point, two of the eight sub-partitions intersect with the given sphere and the points in these sub-partitions are selected as the candidate points.
\begin{figure}\centering
\includegraphics[width=0.45\textwidth]{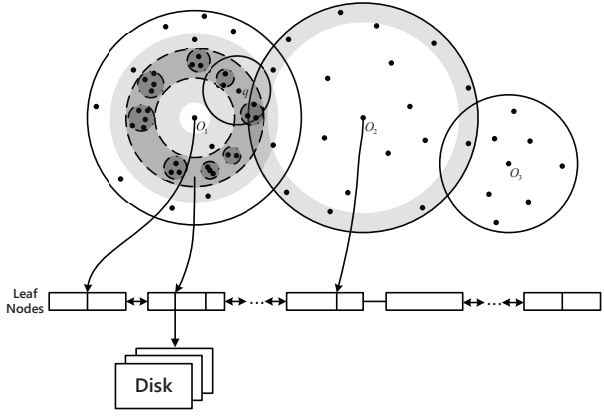}
\caption{IDistance with New Partition Pattern} \label{ParIDistance}
\vspace{-12pt}
\end{figure}
In our partition pattern, it's required to select appropriate values of the number of partitions $k_{p}$ and the number of sub-partitions $k_{sp}$ to ensure that each sub-partition contains a certain number of points to make the filter effective. To the end, we introduce a parameter called selectivity $\mu$. That is, we try to make nearly $\mu n$ points in each sub-partition by setting appropriate $k_{p}$ and $k_{sp}$. We assume that, after the first clustering stage via $k_{p}$-means, the number of points in each cluster is the same, which is $\frac{n}{k_{p}}$. We determine the value of $\varepsilon$ in Formula~\ref{equation7} according to the data distribution to control the number of keys in each cluster, and the number of points corresponding to each key is also assumed to be the same. We denote the number of keys in a cluster as $N_{key}$, thereby the number of points represented by a key is $\frac{n}{k_{p}*N_{key}}$. Based on the aforementioned assumptions, the number of points in each sub-partition is $\frac{n}{k_{p}*N_{key}*k_{sp}}$ after clustering by $k_{sp}$-means. Therefore, the selectivity $\mu=\frac{1}{k_{p}*N_{key}*k_{sp}}$. In the experimental evaluations, we will give the parameter settings on the testing datasets.

\color{black}Algorithm~\ref{alg:paridistance} introduces the index construction containing the dividing process, computing the index keys and constructing the B+-tree to index these points.
\begin{algorithm}
\caption{\textsf{Index-Construct}($D$)}\label{alg:paridistance}
\LinesNumbered
Project original dataset $D$ onto projected dataset $D_{p}$;\\
Divide $D_{p}$ into $k_{p}$ partitions $\{P_{1},P_{2},...,P_{k_{p}}\}$;\\
\For{$i=1$ to $k_{p}$}{
\For{every point $p$ in $P_{i}$}{
$\textit{I}(p)=\lfloor i*C+dis(p,O_{i})/\varepsilon\rfloor$; // Formula~\ref{equation7}\\
}
Divide points with the same index keys in $P_{i}$ into $k_{sp}$ sub-partitions;\\
}
Construct B+-tree index and organize points on disks;\\
\end{algorithm}
\vspace{-20pt}
\section{Time and Space Complexities}\label{sec:complexities}
The time cost of our method consists of five parts. Firstly, according to Section~\ref{sec:quickprobe}, we have the computed optimized projected dimension $m=O(\log n)$ and the time cost of locating the point through Quick-Probe is $2^{m}m+2^{m}+\frac{n}{2^{m}}=O(n\log n)$. Secondly, the time complexity of computing $q$'s projection is $O(d)$. Thirdly, the time cost of locating the partition containing the projected query point and computing the projected query point's key is $k_{p}m+1=O(1)$. Then, since there are $k_{p}N_{key}$ keys in B+-tree, locating the key in the B+-tree costs $\log (k_{p}N_{key})$. In the B+-tree, assuming that $\alpha k_{p}N_{key}$ ($0<\alpha<1$) keys are searched, it costs $\alpha k_{p}N_{key}\log (k_{p}N_{key})$. The process of determining whether the searching range intersects with $\alpha k_{p}N_{key}k_{sp}$ sub-partitions costs $\alpha k_{p}N_{key}k_{sp}m$. Summing them up, the whole searching process costs $\log k_{p}N_{key}+\alpha k_{p}N_{key}\log k_{p}N_{key}+\alpha k_{p}N_{key}k_{sp}m=O(\log n)$. Finally, we denote that the filtering rate is $\beta$ ($0<\beta<1$), which indicates $\beta n$ are selected as candidate points. Hence computing the inner products for candidate points costs $\beta nd=O(d)$. Therefore, the time complexity of our method is $O(n\log n+d+1+\log n+d)=O(d+n\log n)$.

We also analyze the space cost of our proposed method. The space complexity of our method consists of the space complexities of storing $n$ original high-dimensional points and $n$ projected low-dimensional points, which are $O(nd)$ and $nm=O(n\log n)$, respectively. In addition, In Quick-Probe, the space complexity of storing the binary codes and each point $o$'s $\|o\|_{1}$ are $nm=O(n\log n)$ and $O(n)$, respectively. Thus, the total space cost is $O(nd+n\log n+n\log n+n)=O(nd+n\log n)$.

We also list the time and space complexities of two benchmark methods, H2-ALSH~\cite{DBLP:conf/kdd/HuangMFFT18} and Norm Ranging-LSH~\cite{DBLP:conf/nips/YanLDCC18} in Table~\ref{Complexities}. From Table~\ref{Complexities}, the time complexity of our method outperforms two benchmark methods. Although the space complexities of three methods are the same, in fact, the projected space in our method is much smaller than the number of hash tables in H2-ALSH or the hash codes' length in Norm Ranging-LSH.
\vspace{-10pt}
\begin{table}[H]
\centering
\caption{Time and Space Complexities}
\label{Complexities}
\begin{tabular}{lll}
\hline\noalign{\smallskip}
 & Time Complexity & Space Complexity \\
\noalign{\smallskip}\hline\noalign{\smallskip}
ProMIPS & $O(d+n\log n)$ & $O(nd+n\log n)$ \\
L2-ALSH & $O(d\log n+n\log n)$ & $O(nd+n\log n)$ \\
Norm Ranging-LSH & $O(d\log n+n\log n)$ & $O(nd+n\log n)$ \\
\noalign{\smallskip}\hline
\end{tabular}
\vspace{-10pt}
\end{table}

\section{Experimental Evaluations}\label{sec:experimentalevaluations}
\vspace{-5pt}
\subsection{Experiment Setup}
\vspace{-5pt}
\subsubsection{Benchmark Methods}
We select two state-of-the-art methods with probability guarantee in accuracy, H2-ALSH~\cite{DBLP:conf/kdd/HuangMFFT18} and Norm Ranging-LSH~\cite{DBLP:conf/nips/YanLDCC18}, as two benchmark methods. In addition, to compare our method with the method without probability guarantee in accuracy, we adopt the asymmetric transformation in H2-ALSH to convert MIP search problem into NN search problem, and select the latest product quantization-based NN search technique~\cite{DBLP:conf/cvpr/KalantidisA14} which performs well in accuracy and efficiency to solve the problem as a benchmark method. In the experiments, our method is denoted as ``ProMIPS". Three benchmark methods are denoted as ``H2-ALSH", ``Range-LSH" and ``PQ-Based", respectively. To evaluate the page access, we employ the disk-resident QALSH in the implementation of H2-ALSH. In Range-LSH, we organize the data in each subset sequentially on disks according to the descending order of each subset's maximum norm. In PQ-based method, we organize the data according to each cell's inverted list. All methods are implemented in Java and all experiments are conducted on an ECS with Intel Core Processor (Haswell, no TSX) 2.29GHZ, 48GB main memory, and 512GB hard disk, running under Windows 10. We use the buffering management in the operating system.

\subsubsection{Datasets and queries}
Four real datasets Netflix~\cite{bennett2007netflix}, Yahoo~\cite{dror2011yahoo}, P53\footnote{http://archive.ics.uci.edu/ml/datasets/p53+Mutants} and Sift\footnote{http://archive.ics.uci.edu/ml/datasets/SIFT10M} are summarized in Table~\ref{Dataset}. On Netflix and Yahoo, the user vectors and item vectors are generated by PureSVD~\cite{DBLP:conf/kdd/HuangMFFT18, cremonesi2010performance}. For all datasets, 100 points are randomly selected as the query points.

\begin{table}[H]
\centering
\caption{Datasets}
\label{Dataset}
\begin{tabular}{llll}
\hline\noalign{\smallskip}
Parameter & $n$ & $d$ & Data Size \\
\noalign{\smallskip}\hline\noalign{\smallskip}
Netflix & 17770 & 300 & 84.2MB \\
Yahoo & 624961 & 300 & 2.3GB \\
P53 & 31420 & 5408 & 1.07GB\\
Sift & 11164866 & 128 & 7.3GB\\
\noalign{\smallskip}\hline
\end{tabular}
\vspace{-5pt}
\end{table}

\subsubsection{Evaluation metrics}
\begin{itemize}
  \item Index Size. It is defined as the size of each evaluated method's index.
  \item Pre-processing Time. It is defined as the pre-computation and the index construction time of each evaluated method.
  \item Overall Ratio. It is defined as: $\frac{1}{k}\sum_{i=1}^{k} \frac{\langle o_{i},q\rangle}{\langle o_{i}^{*},q\rangle}$ in $c$-$k$-AMIP search problem, where $o_{i}$ is the $i$-th returned AMIP point and $o_{i}^{*}$ is the exact $i$-th MIP point of the query point. Intuitively, the overall ratio is between 0 and 1 and a larger overall ratio indicates a higher accuracy.
  \item Recall. It is defined as: $t/k$ in $c$-$k$-AMIP search problem. $t$ is the number of the returned AMIP points which are actually in the set of exact $k$-MIP points. A larger recall means more exact $k$-MIP points are returned, indicating a higher accuracy.
  \item Page Access. It is defined as the number of disk pages to be accessed during the searching process.
  \item CPU Time. It is defined as the CPU time for performing a $c$-$k$-AMIP search.\
  \item Total Time. It is defined as the running time for reading data from disks and performing a $c$-$k$-AMIP search.
\end{itemize}
\begin{figure}
\hspace{-10pt}
\subfigure[Index Size]{
\includegraphics[width=.25\textwidth]{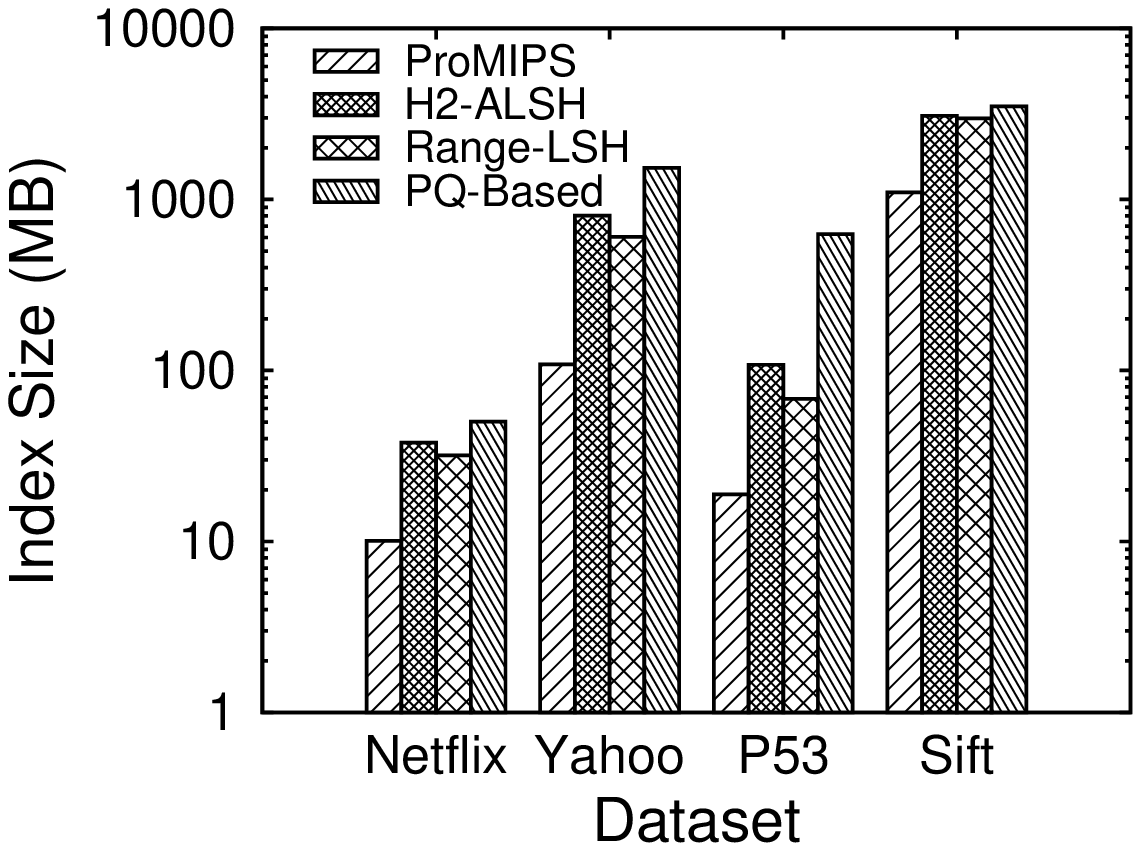}}
\hspace{-10pt}
\subfigure[Pre-processing Time]{
\includegraphics[width=.25\textwidth]{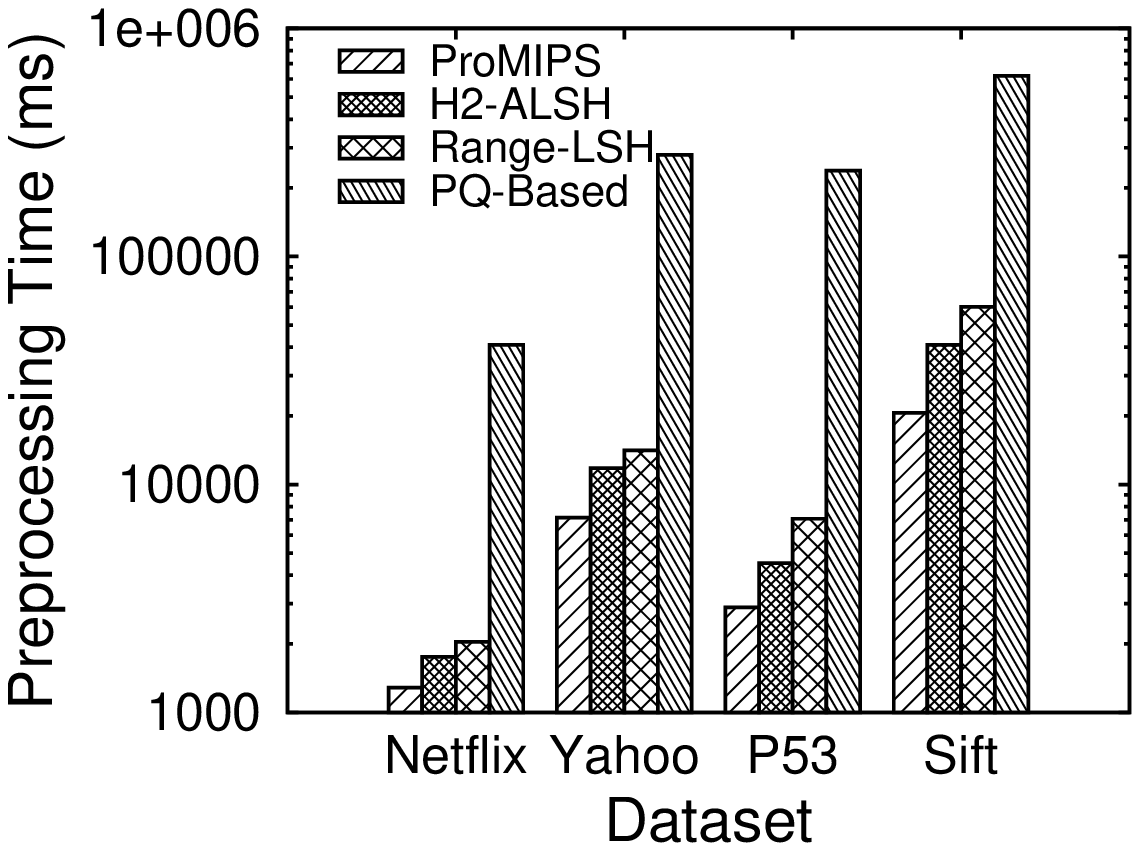}}
\caption{Index Size and Pre-processing Time}
\label{Preprocessing}
\vspace{-15pt}
\end{figure}
\subsubsection{Parameter Settings}
\begin{figure*}
\hspace{-10pt}
\subfigure[Netflix]{
\includegraphics[width=.25\textwidth]{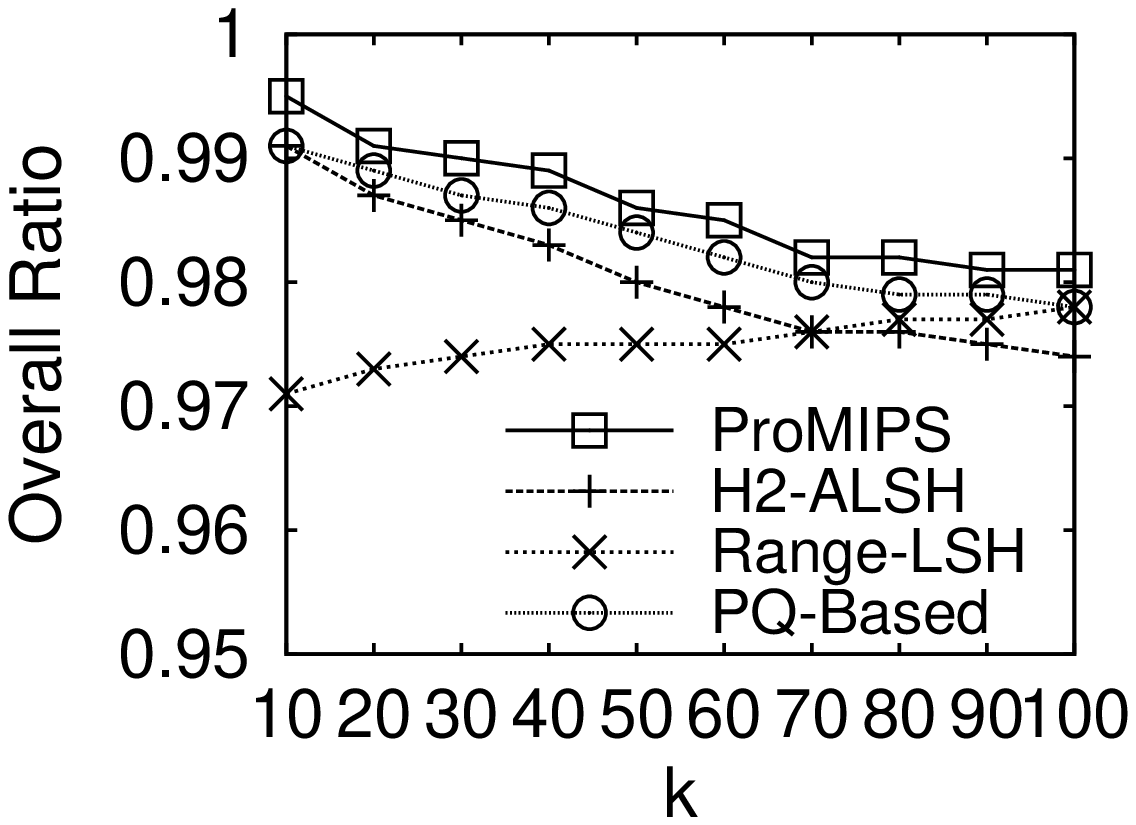}}
\hspace{-10pt}
\subfigure[Yahoo]{
\includegraphics[width=.25\textwidth]{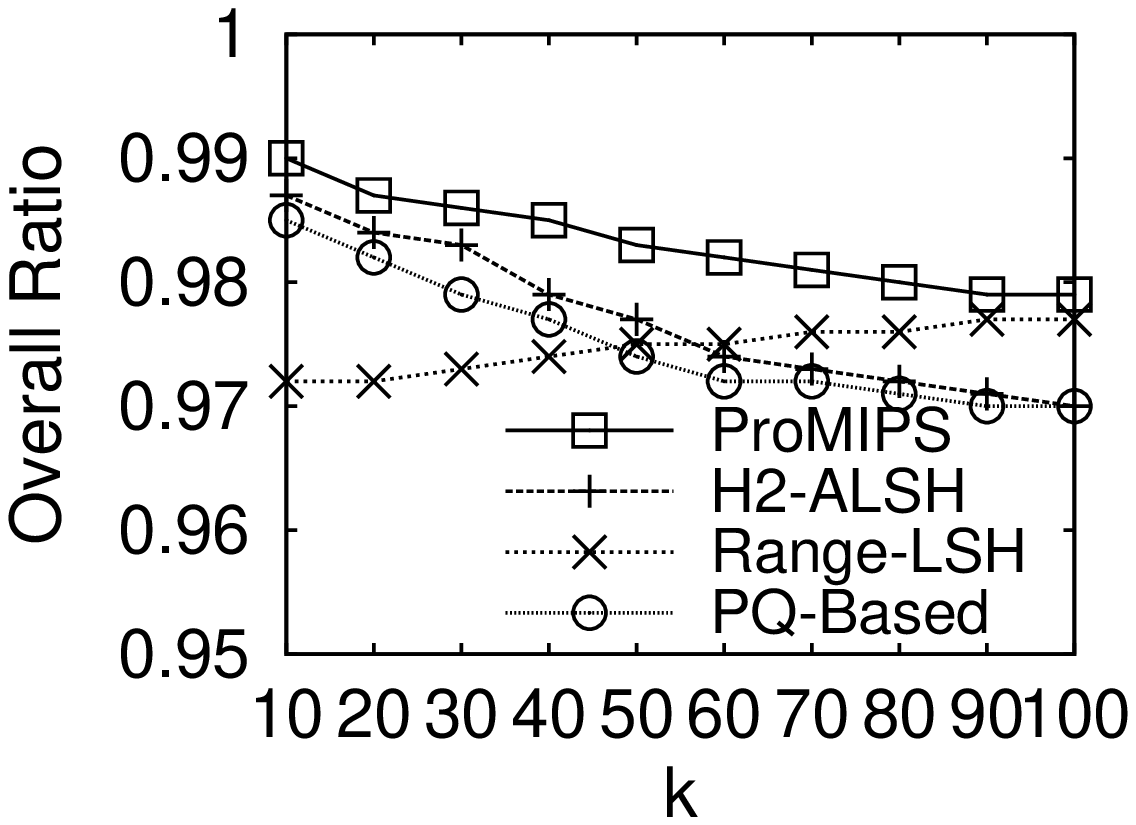}}
\hspace{-10pt}
\subfigure[P53]{
\includegraphics[width=.25\textwidth]{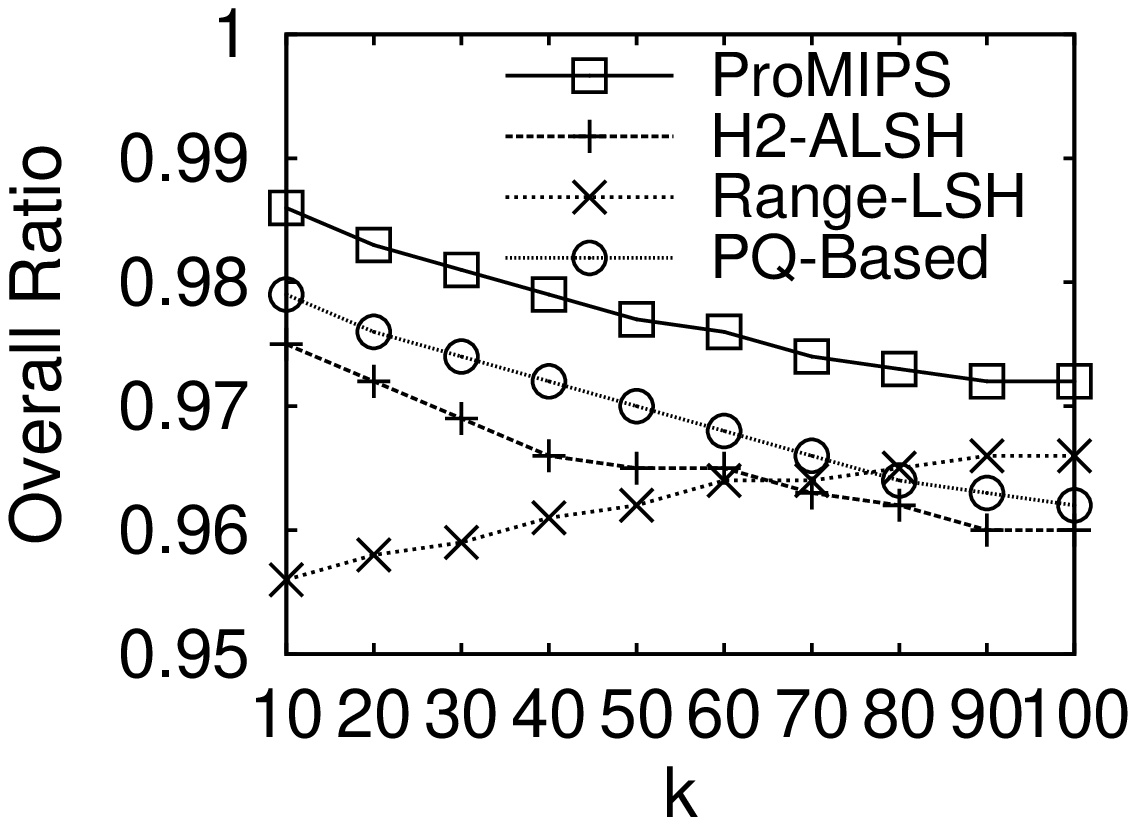}}
\hspace{-10pt}
\subfigure[Sift]{
\includegraphics[width=.25\textwidth]{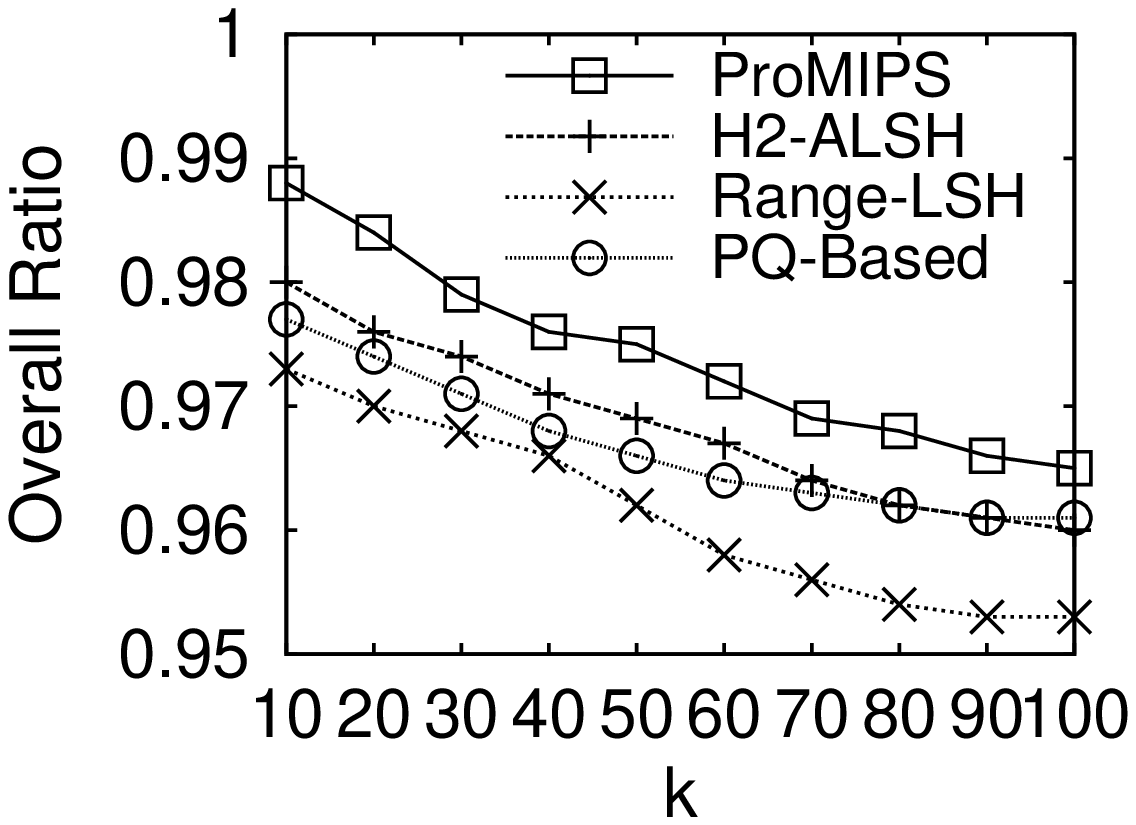}}
\caption{Overall Ratio}
\label{Overall_Ratio}
\vspace{-5pt}
\end{figure*}

\begin{figure*}
\hspace{-10pt}
\subfigure[Netflix]{
\includegraphics[width=.25\textwidth]{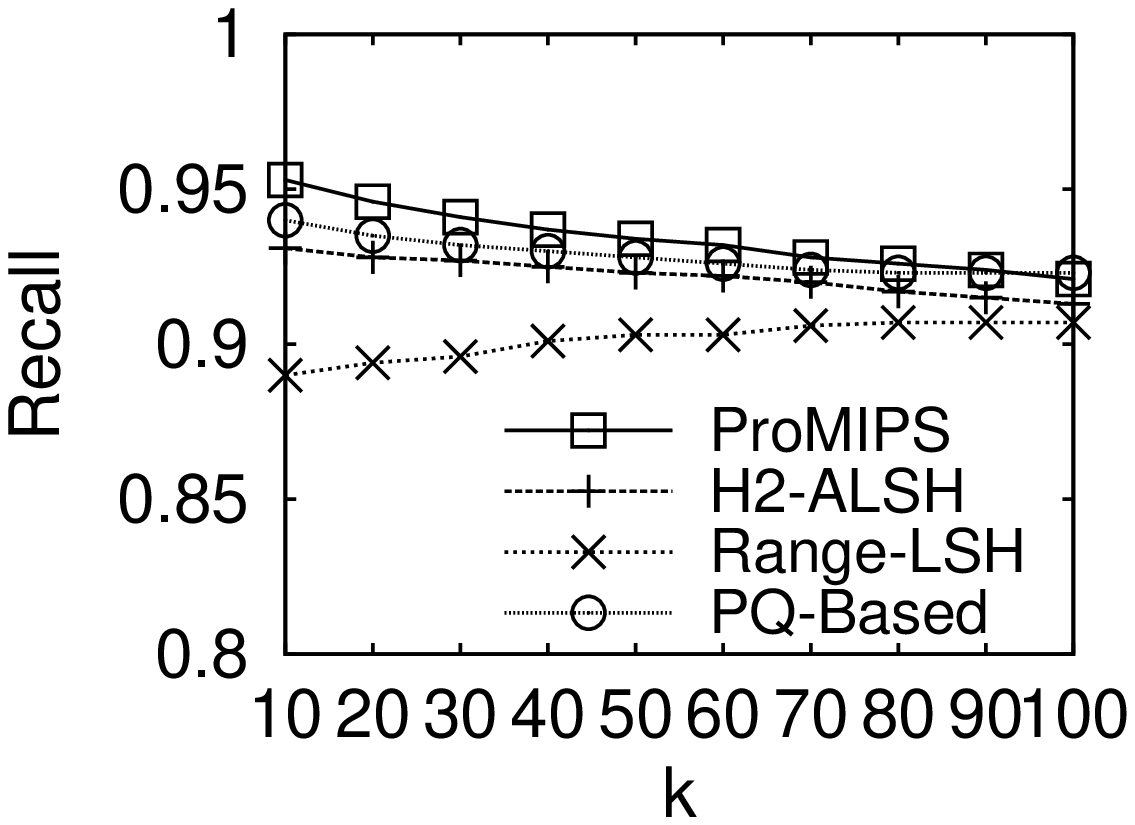}}
\hspace{-10pt}
\subfigure[Yahoo]{
\includegraphics[width=.25\textwidth]{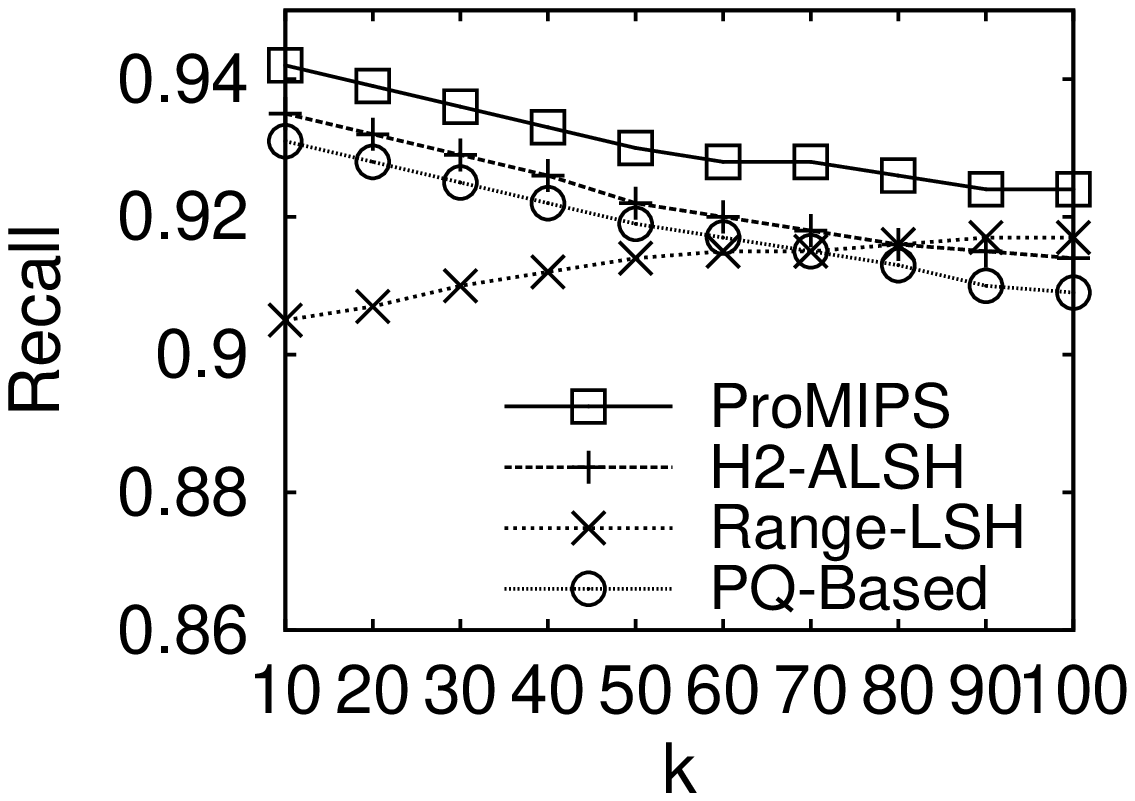}}
\hspace{-10pt}
\subfigure[P53]{
\includegraphics[width=.25\textwidth]{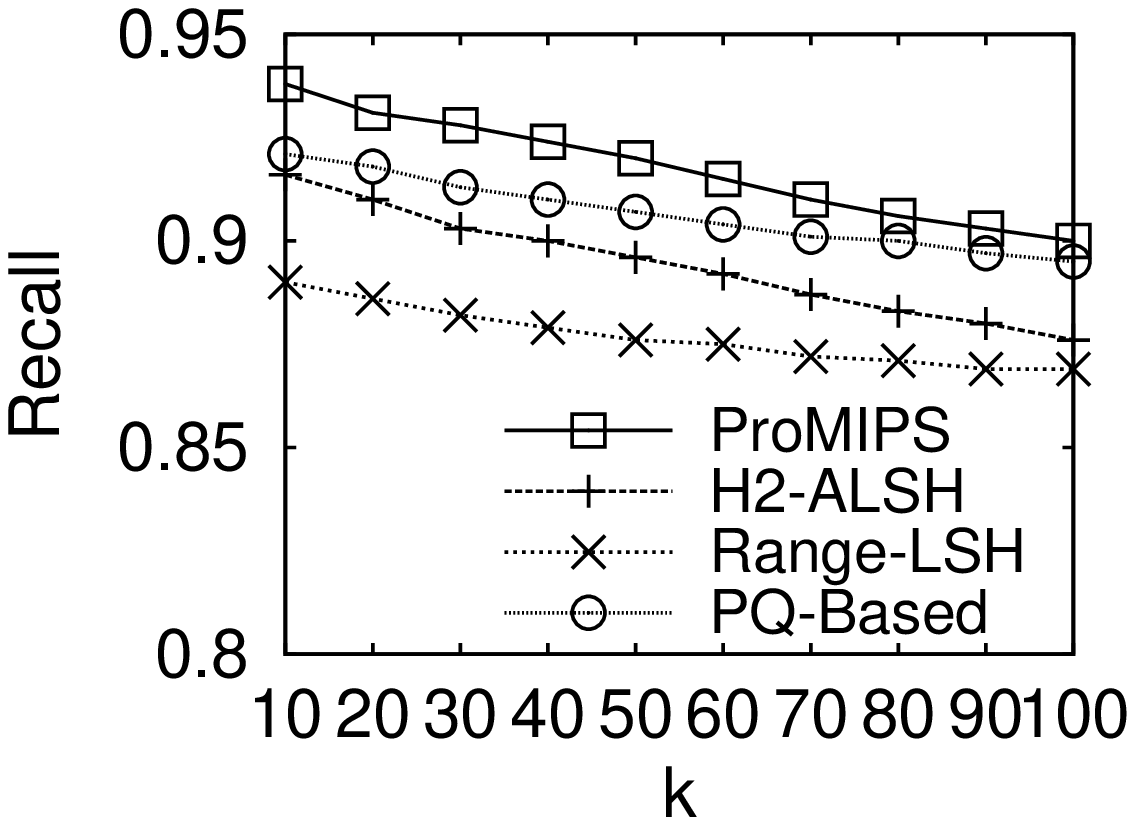}}
\hspace{-10pt}
\subfigure[Sift]{
\includegraphics[width=.25\textwidth]{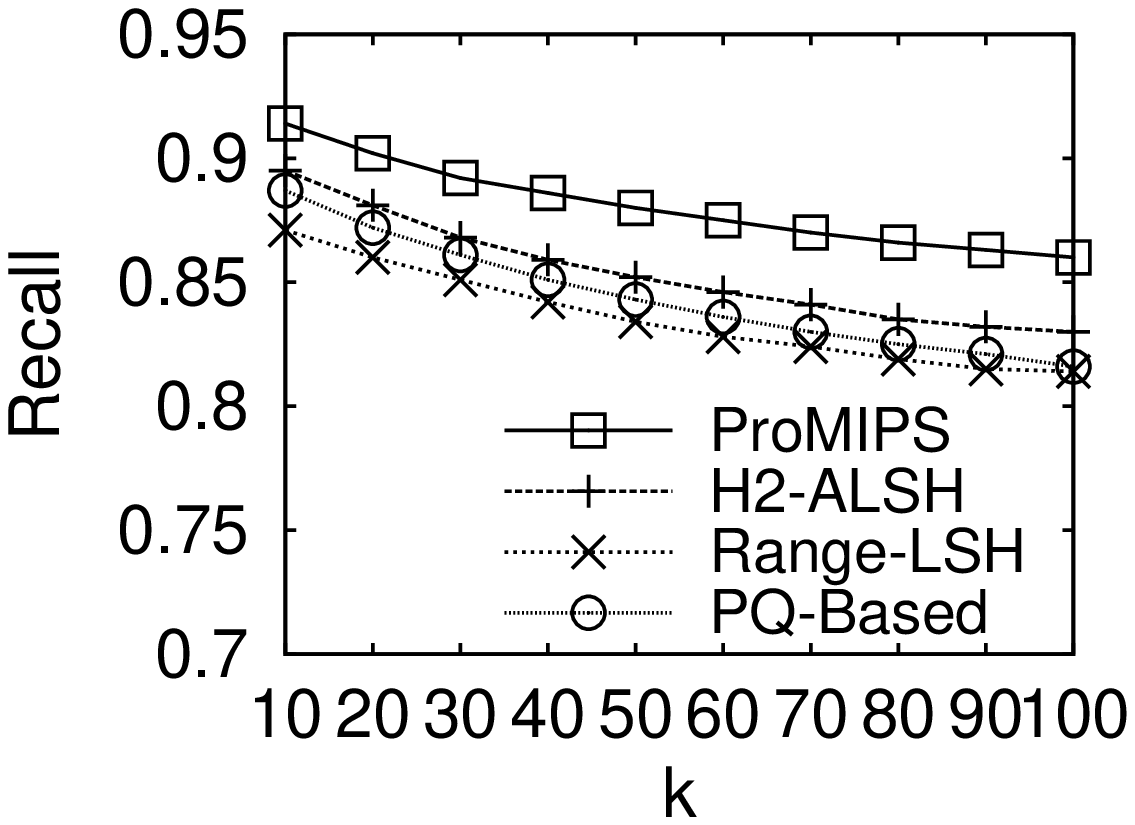}}
\caption{Recall}
\label{Recall}
\vspace{-12pt}
\end{figure*}
The performance of ProMIPS is evaluated under different parameter settings. According to Section~\ref{subsec:OPD}, the projected dimensions $m$ are set to 6 on Netflix and P53. On Yahoo and Sift, the projected dimensions $m$ are set to 8 and 10, respectively. Through experiments, we find that it doesn't have much effect on efficiency when the values of $k_{p}$, $N_{key}$ and $k_{sp}$ are set in the ranges of 5-15, 20-50 and 5-25, respectively. Therefore, we set $k_{p}=5$, $N_{key}=40$ and $k_{sp}=10$ as the default values for all testing datasets. The values of $\varepsilon$ on Netflix, Yahoo, P53 and Sift are 0.02, 40, 0.1 and 250, respectively. The default approximation ratio $c$ is set to 0.9 and we vary $c$ to 0.7, 0.8 and 0.9 to evaluate its impact on ProMIPS's searching accuracy and efficiency. The default guaranteed probability $p$ is set to 0.5 and we vary $p$ to 0.3, 0.5, 0.7 and 0.9 to evaluate its impact on ProMIPS's searching accuracy and efficiency. In H2-ALSH, the value of $c_{0}$ is fixed to 2.0~\cite{DBLP:conf/kdd/HuangMFFT18}. In Range-LSH, we divide the datasets into 32 partitions under a code length of 16~\cite{DBLP:conf/nips/YanLDCC18}. In PQ-based method, the whole space is divided into 16 subspaces. The number of centroids in each subspace is 256 and the number of searched nearest cells is 16 in the searching process~\cite{DBLP:conf/cvpr/KalantidisA14}. The required $k$ is set from 10 to 100 in all testing cases. When evaluating the page access, the disk page's size is set to 4KB on Netflix, Yahoo and Sift. On P53, the disk page's size is set to 64KB due to its high dimension.

\color{black}\subsection{Pre-processing Time and Index Size}
The pre-process of our method contains generating each point's projection, computing each point's norms and converting the projected points into binary codes for Quick-Probe, and constructing the index. The pre-processes of H2-ALSH and Range-LSH contain constructing multiple hash tables and transforming data points. The pre-process of PQ-based method contains constructing quantizers with multiple cells, computing the residuals, training for the rotation matrices and maintaining each cell's corresponding inverted list. The index size and the pre-processing time of four evaluated methods are illustrated in Figs.~\ref{Preprocessing}(a) and (b), respectively. On all datasets, the index size and the pre-processing time of ProMIPS beat other methods. This is because H2-ALSH and Range-LSH construct multiple hash tables and PQ-based method stores many local rotation matrices and cells incurring large space overheads, while ProMIPS constructs iDistance with a single B+-tree. In ProMIPS, although the two-stage dividing process in the index construction is time-consuming, only one B+-tree is required, which reduces the time overhead. Compared to H2-ALSH, Range-LSH uses more hash vectors to generate each point's bit vector and their proposed single-table multi-probe strategy requires more time to rank the hash tables. Therefore, it takes more pre-processing time in Range-LSH. Nevertheless, since the points' bit vectors take up less space, the index size of Range-LSH is smaller than that of H2-ALSH. Since the training process to obtain the optimized rotation matrices is costly and it's space-consuming to store rotation matrices and cells, the performances of PQ-based method on the index size and the pre-processing time are the worst.

\subsection{Overall Ratio and Recall}
Fig.~\ref{Overall_Ratio} reports the results on overall ratio when varying the value of $k$ from 10 to 100. Four methods perform well on all datasets while the values of overall ratio are over 0.95. From the experimental results, the overall ratio of ProMIPS is higher than those of the other three methods by up to $3\%$. Meanwhile, the overall ratio of ProMIPS is larger than the default approximation ratio when varying $k$. The phenomenon demonstrates that ProMIPS can guarantee $c$-$k$-AMIP search in accuracy. In addition, we test the recall of four methods on four datasets and the results are shown in Fig.~\ref{Recall}. In Fig.~\ref{Recall}, the similar trends are observed. Both of the experimental results on the overall ratio and recall illustrate that ProMIPS can provide $c$-$k$-AMIP point with a high accuracy.

\begin{figure*}
\hspace{-10pt}
\subfigure[Netflix]{
\includegraphics[width=.25\textwidth]{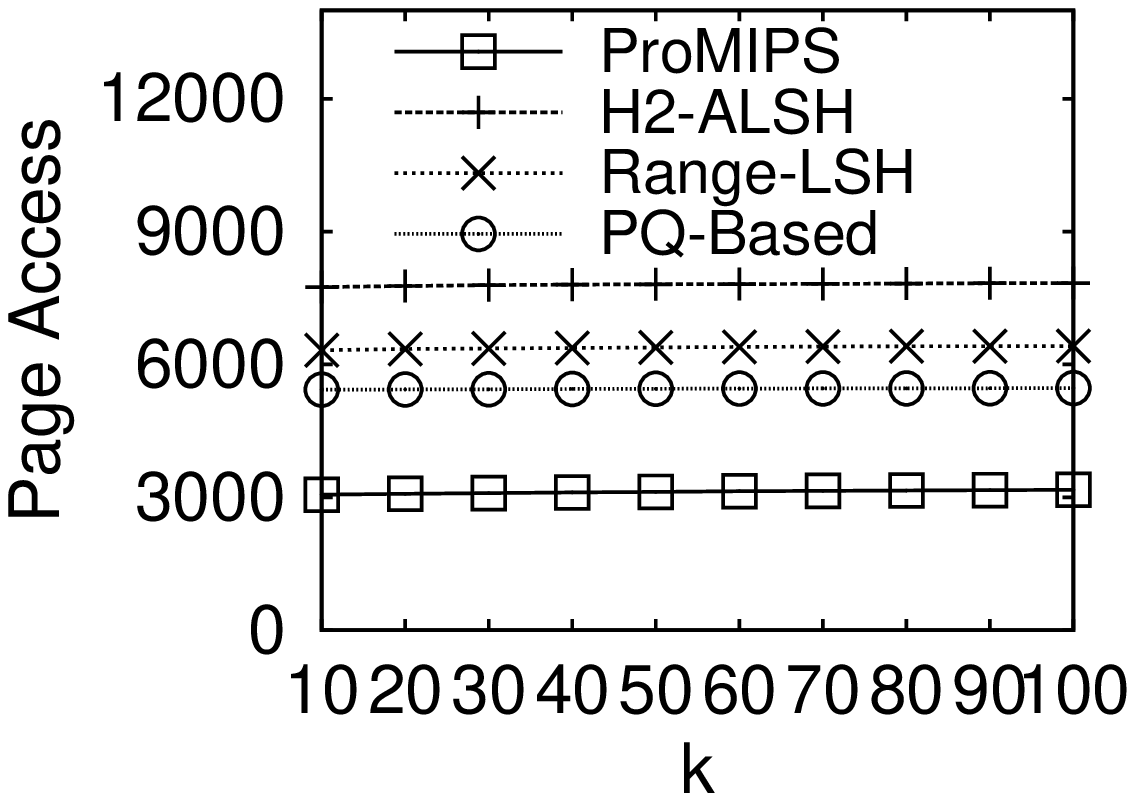}}
\hspace{-10pt}
\subfigure[Yahoo]{
\includegraphics[width=.25\textwidth]{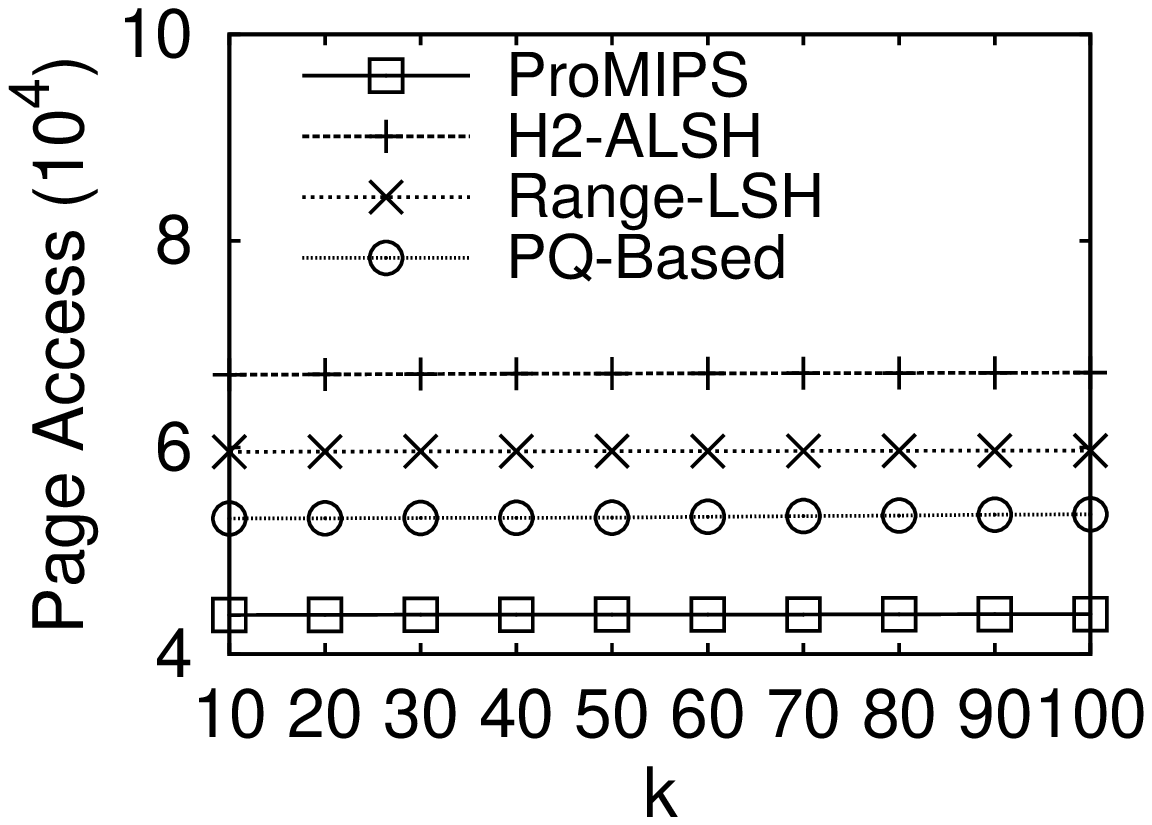}}
\hspace{-10pt}
\subfigure[P53]{
\includegraphics[width=.25\textwidth]{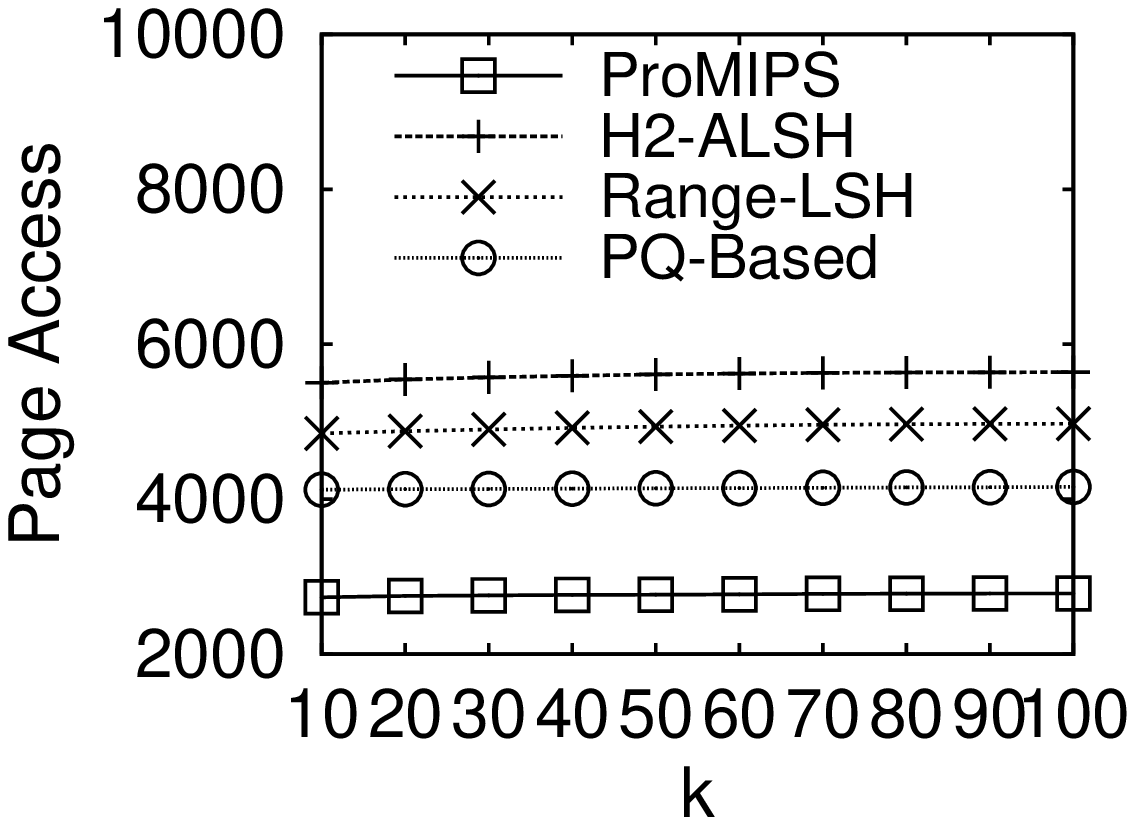}}
\hspace{-10pt}
\subfigure[Sift]{
\includegraphics[width=.25\textwidth]{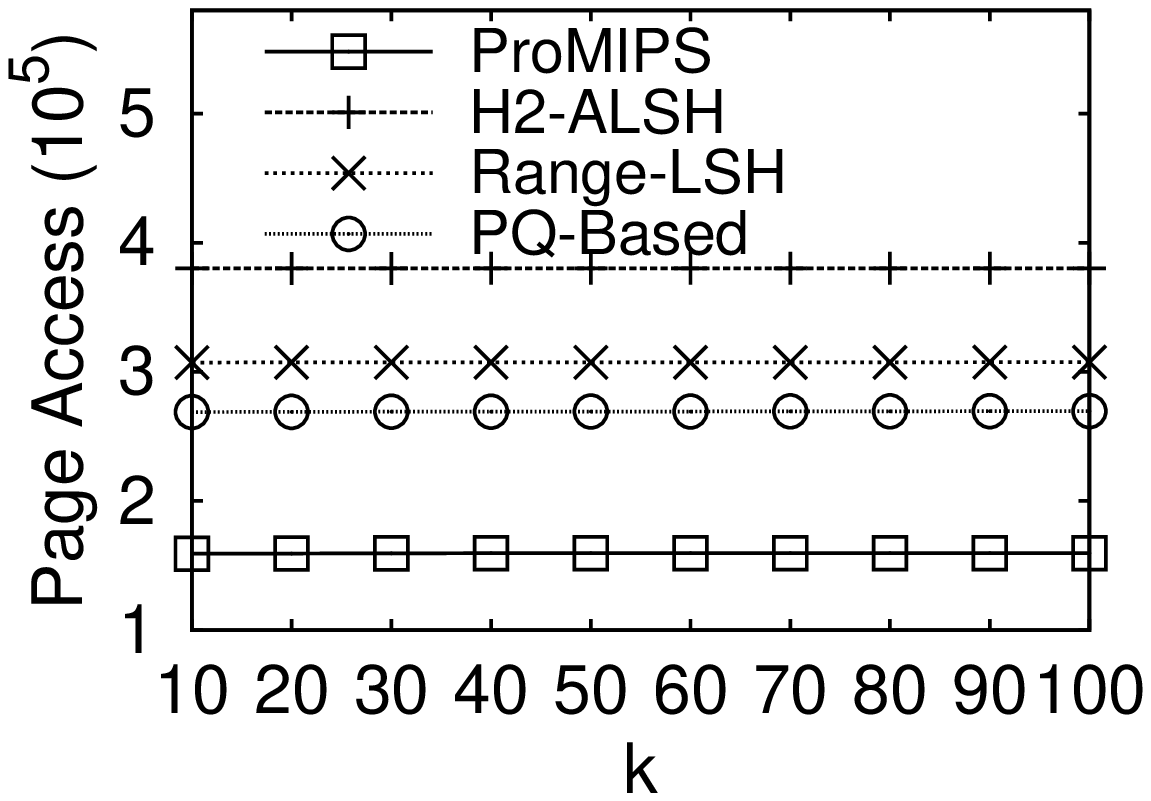}}
\caption{Page Access}
\label{IO_Cost}
\vspace{-5pt}
\end{figure*}

\begin{figure*}
\hspace{-10pt}
\subfigure[Netflix]{
\includegraphics[width=.25\textwidth]{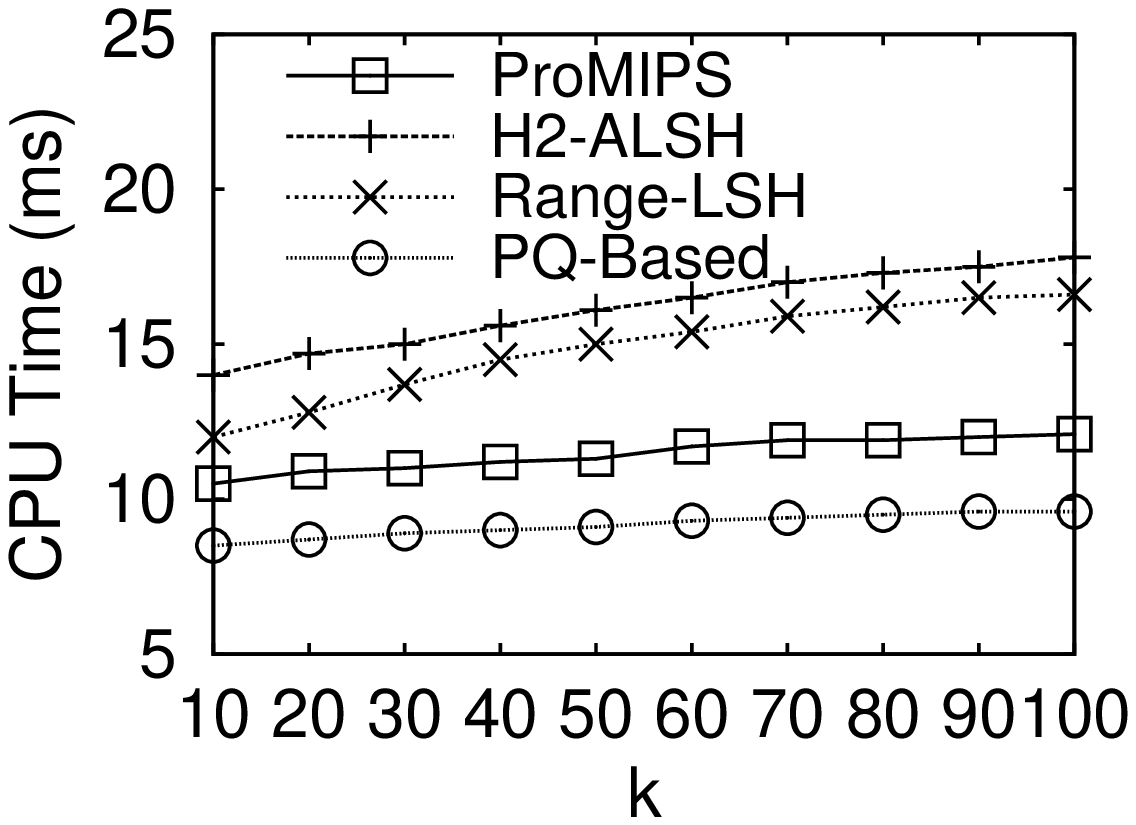}}
\hspace{-10pt}
\subfigure[Yahoo]{
\includegraphics[width=.25\textwidth]{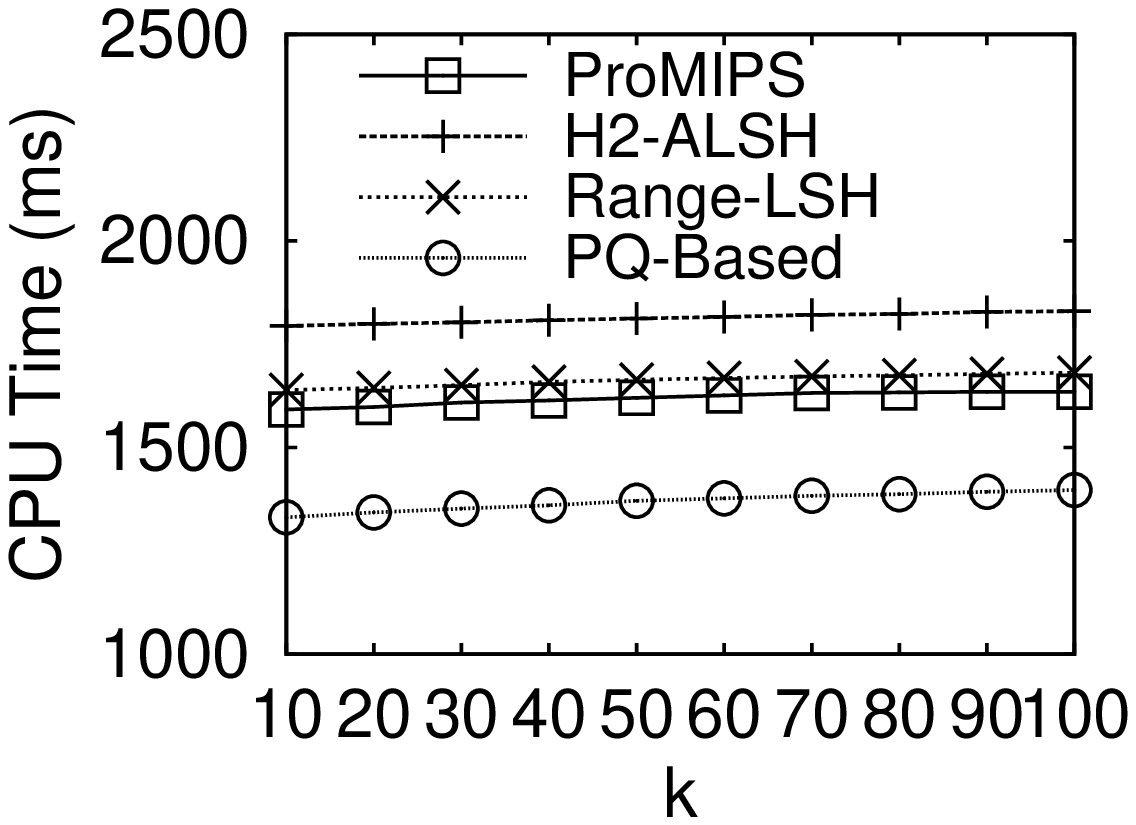}}
\hspace{-10pt}
\subfigure[P53]{
\includegraphics[width=.25\textwidth]{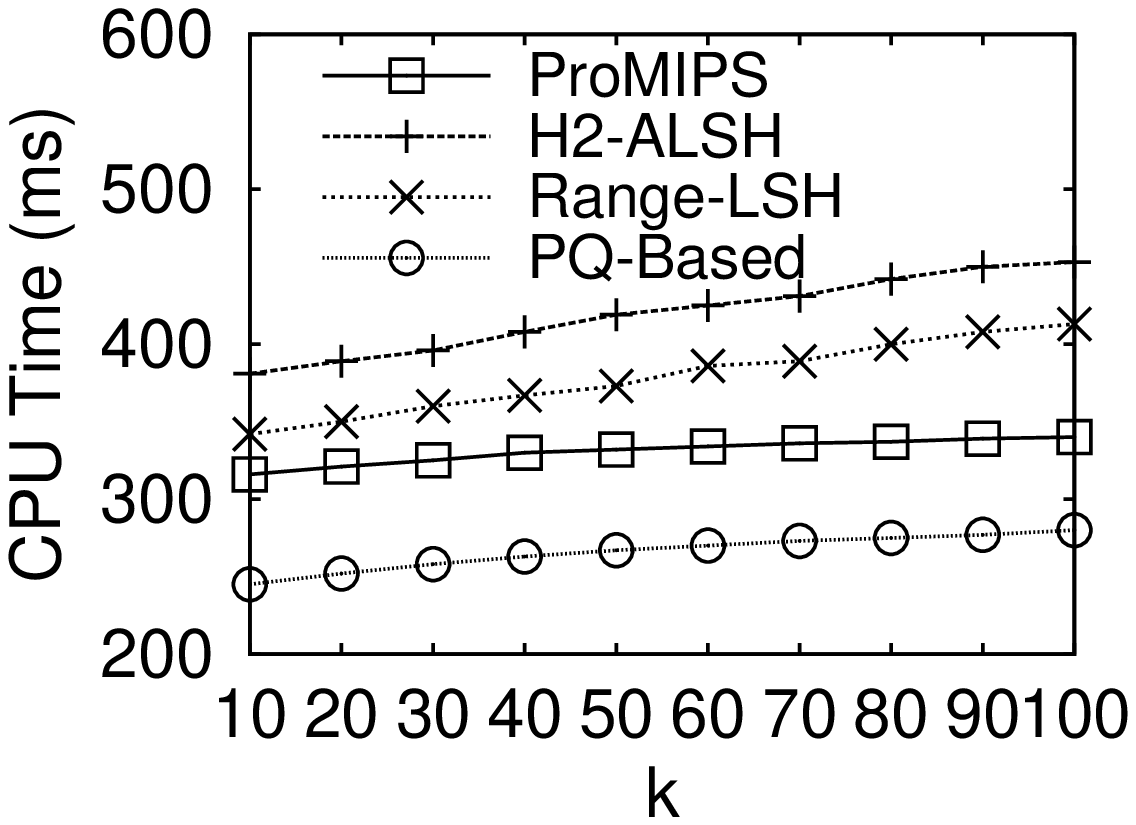}}
\hspace{-10pt}
\subfigure[Sift]{
\includegraphics[width=.25\textwidth]{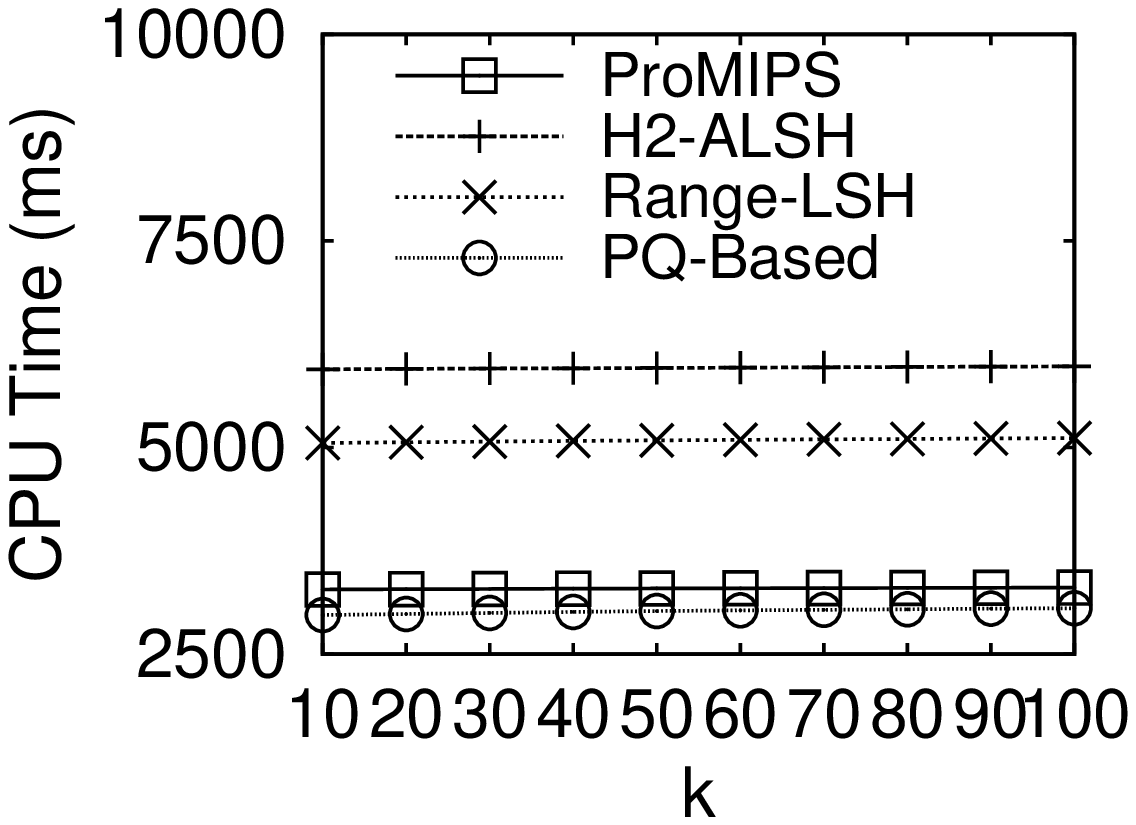}}
\caption{CPU's Time}
\label{Running_Time}
\vspace{-12pt}
\end{figure*}

\subsection{Page Access}
We evaluate the page access of four methods by varying $k$ from 10 to 100 as well and show the experimental results in Fig.~\ref{IO_Cost}. In Fig.~\ref{IO_Cost}, ProMIPS outperforms the other three methods in all testing cases as $k$ increases. It is because iDistance used in ProMIPS only requires one B+-tree as index, while both H2-ALSH and Range-LSH require more hash tables to ensure the accuracy, leading to more candidate points. In addition, the searching conditions in our method enable us to verify fewer candidate points to obtain satisfactory results. Meanwhile, benefiting from Quick-Probe, we can avoid reading the projected points from disks and testing them one by one. Besides, using the iDistance with our proposed partition pattern, the points can be collectively organized on disks in sub-partitions. The points can be read from disks sequentially to reduce page accesses. The experimental results on four datasets also illustrate that ProMIPS provides good efficiency in all data dimensions and at all data scales, which reflects our method's high scalability. In PQ-based method, we have to check many PQ-encoded residuals, which incurs more page accesses. Compared to H2-ALSH, Range-LSH performs better in terms of the page access because fewer hash buckets are probed during the searching process in Range-LSH benefitting from their proposed single-table multi-probe strategy, which brings fewer selected candidate points.

\subsection{CPU Time and Total Time}
In Fig.~\ref{Running_Time}, we evaluate the CPU time to test the efficiency of four methods. From the experimental results, the performance of ProMIPS on CPU time is comparable. PQ-based method performs the best on CPU time because the distances between PQ-encoded residuals are pre-computed in the pre-process. Compared to H2-ALSH and Range-LSH, ProMIPS requires fewer candidate points to guarantee the accuracy benefiting from the derived effective searching conditions. In addition, the process of Quick-Probe determines a certain searching range in the projected space, which avoids testing each returned point to reduce the CPU time. With respect to H2-ALSH, it's more complex to count points' frequencies to fetch the candidate points in more hash tables compared to directly scanning points in hash tables for candidate points in Range-LSH. Therefore, it takes more CPU's running time in H2-ALSH.

Furthermore, we also evaluate the total time to verify the efficiency. Due to the space limits, we only show the experimental results on Netflix and Yahoo in Fig.~\ref{Total_Time}. In the whole searching process, a large portion of the time consumption comes from reading data from disks. Since ProMIPS performs the best on page access, it obtains the superior performance on total time.
\vspace{-15pt}
\subsection{Impact of $c$ and $p$}
\vspace{-5pt}
Since ProMIPS guarantees $c$-$k$-MIP search in accuracy, we vary the approximation ratio $c$ and the guaranteed probability $p$ to evaluate how the performances of ProMIPS vary with $c$ and $p$. We test overall ratio, recall, page access, CPU time and total time on four datasets. Due to the space limits, we only show the results on the overall ratio and page access to demonstrate our method's accuracy and efficiency. The recall and running time show similar trends with the overall ratio and page access, respectively. The experimental results are reported in Fig.~\ref{Imp_c} and Fig.~\ref{Imp_p}.

In Fig.~\ref{Imp_c}, the overall ratio decreases as $c$ decreases. This is because a smaller $c$ leads to a smaller range according to the searching conditions and fewer candidate points are selected, which leads to a lower accuracy. Although the overall ratio decreases, it's still larger than the given approximation ratio $c$. It demonstrates that ProMIPS can guarantee $c$-$k$-MIP search in accuracy. In Fig.~\ref{Imp_c}, a larger overall ratio leads to more page accesses, which shows that ProMIPS enjoys a better trade-off between the accuracy and efficiency.

In Fig.~\ref{Imp_p}, it shows that a higher probability leads to a higher overall ratio. This is because a higher $p$ leads to a larger searching range containing more candidate points. But more candidate points also incur more page accesses. Although we can obtain a higher overall ratio when $p=0.9$, it incurs much more page accesses at the same time. It demonstrates that the increasing rate of accuracy is lower than the decreasing rate of efficiency as $p$ increases.

\begin{figure}
\hspace{-10pt}
\subfigure[Netflix]{
\includegraphics[width=.25\textwidth]{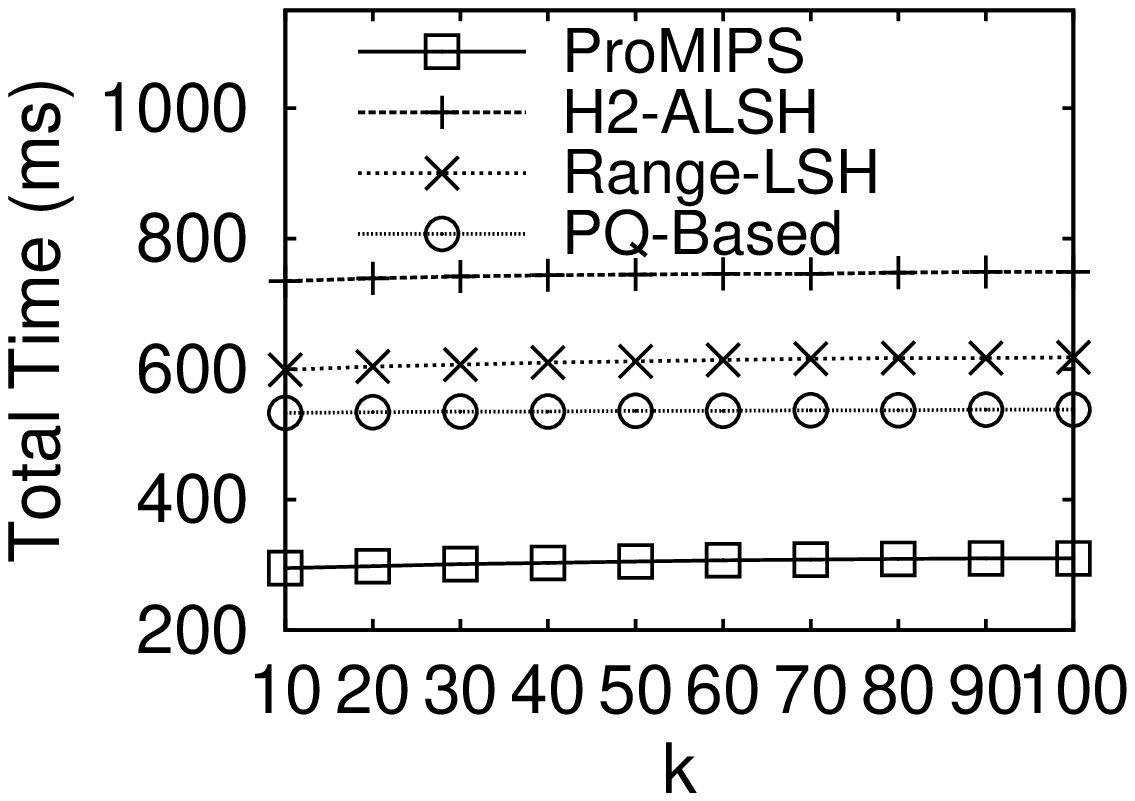}}
\hspace{-10pt}
\subfigure[Yahoo]{
\includegraphics[width=.25\textwidth]{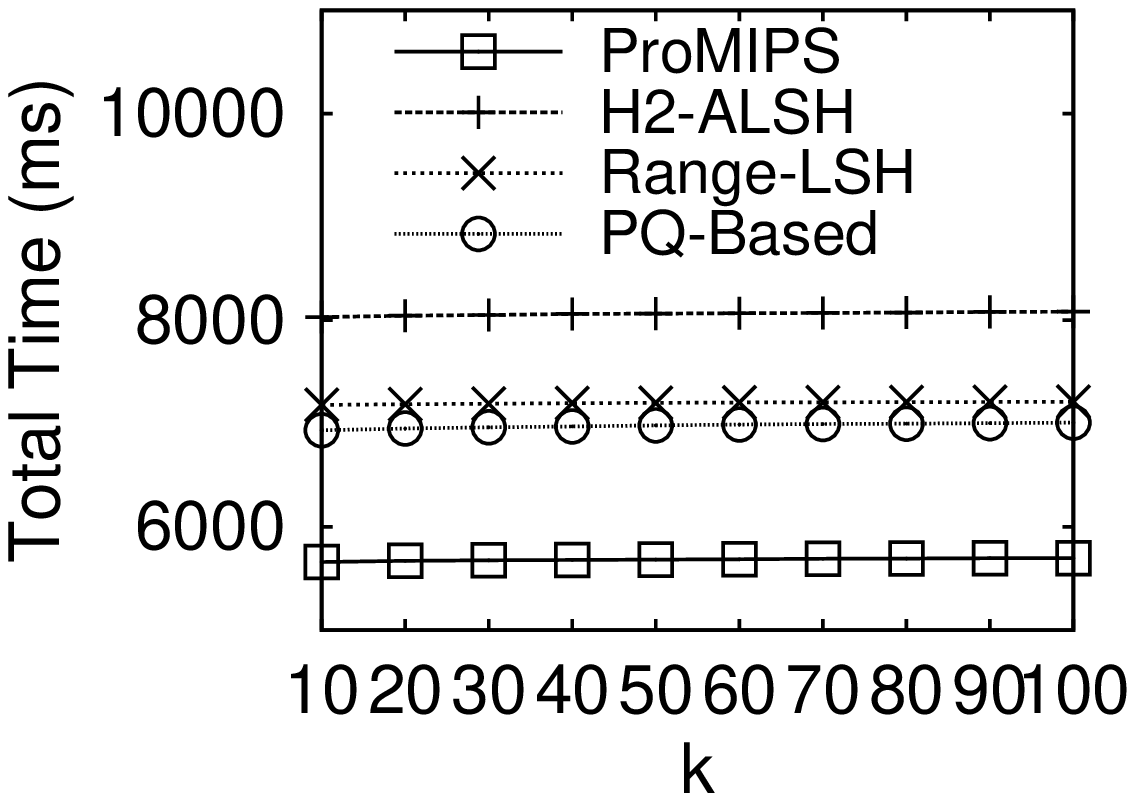}}
\caption{Total Time}
\label{Total_Time}
\vspace{-5pt}
\end{figure}

\begin{figure}
\hspace{-10pt}
\subfigure[Overall Ratio]{
\includegraphics[width=.25\textwidth]{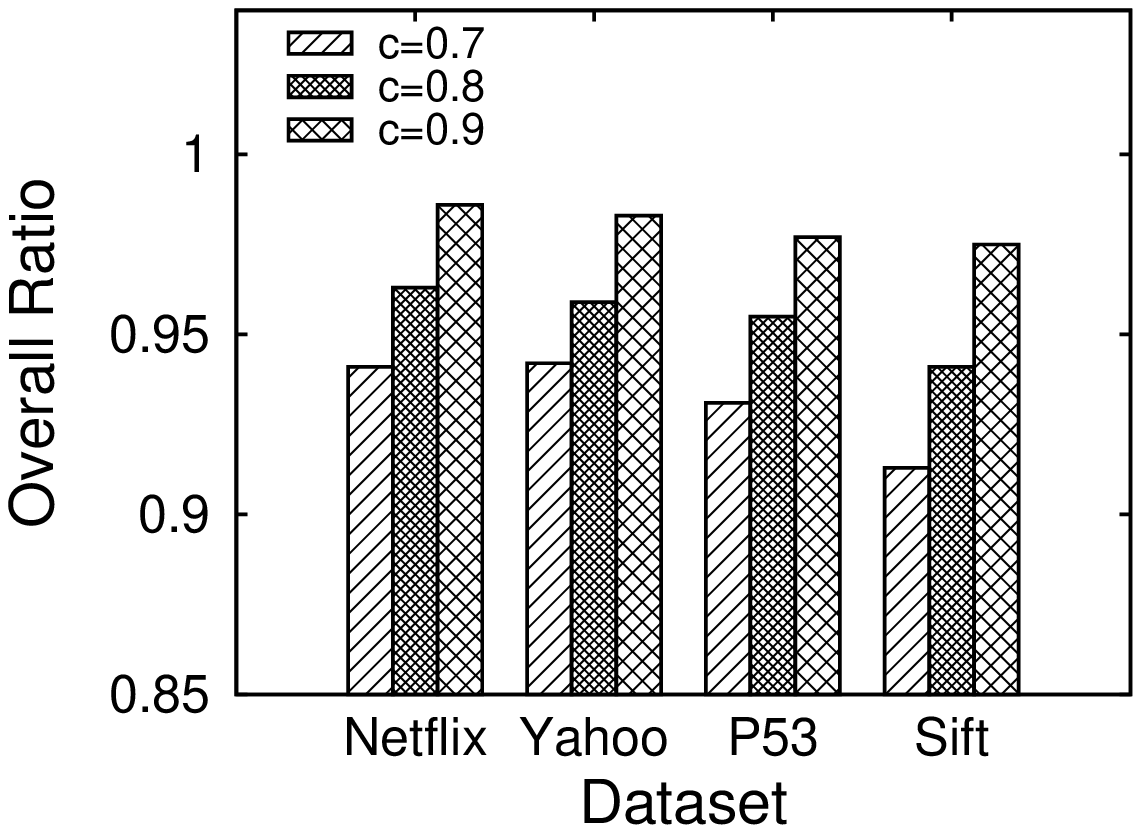}}
\hspace{-10pt}
\subfigure[Page Access]{
\includegraphics[width=.25\textwidth]{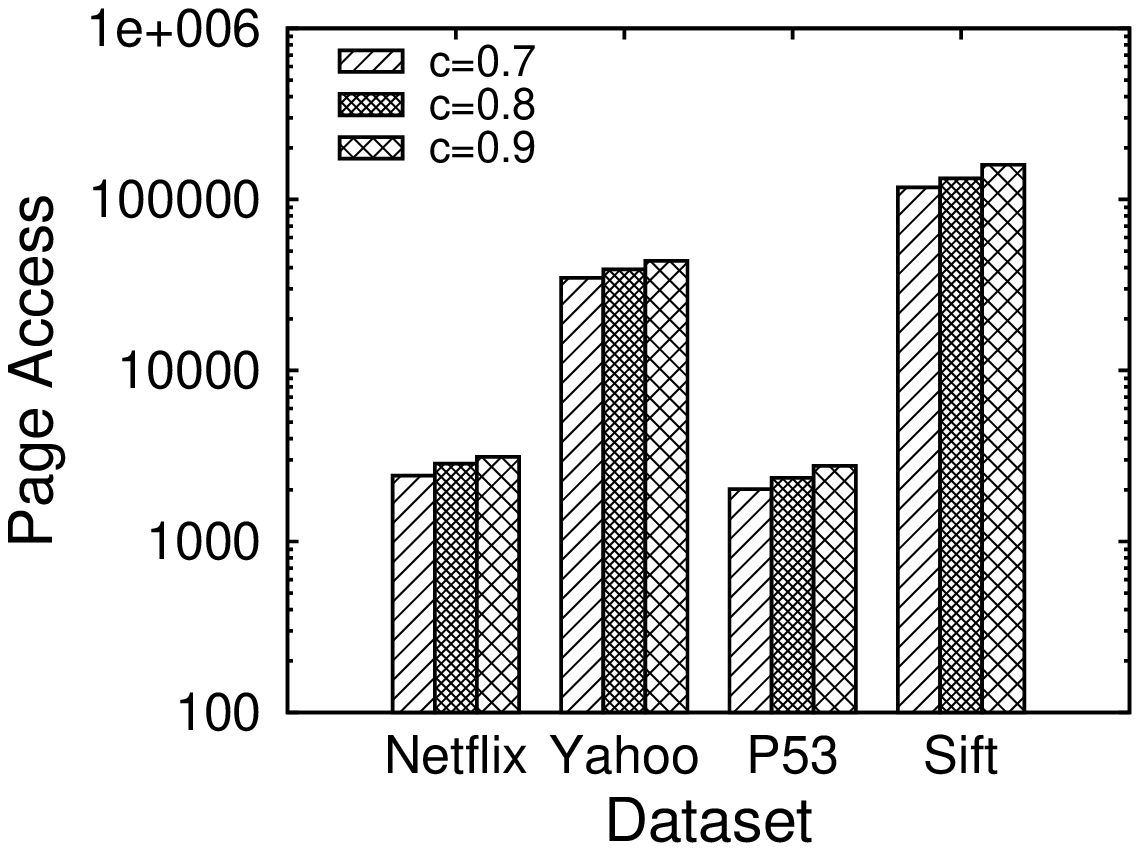}}
\caption{Impact of $c$}
\label{Imp_c}
\vspace{-12pt}
\end{figure}

\begin{figure}
\hspace{-10pt}
\subfigure[Overall Ratio]{
\includegraphics[width=.25\textwidth]{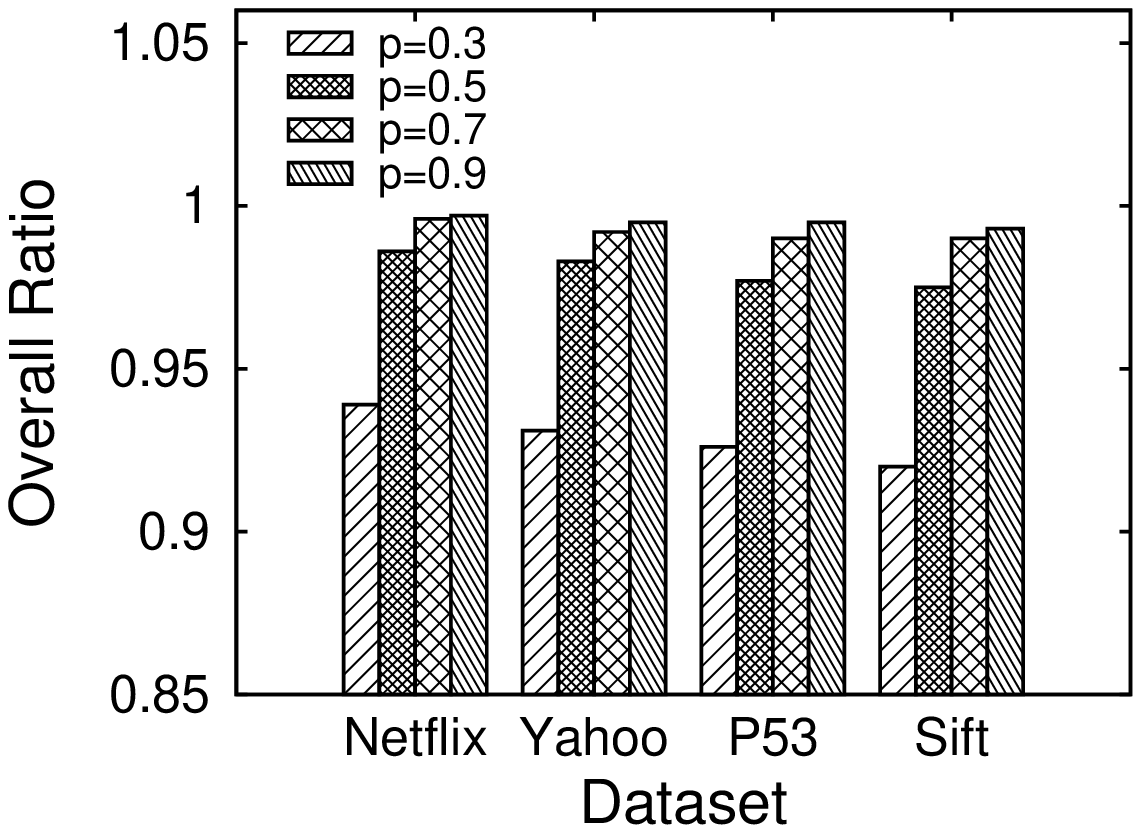}}
\hspace{-10pt}
\subfigure[Page Access]{
\includegraphics[width=.25\textwidth]{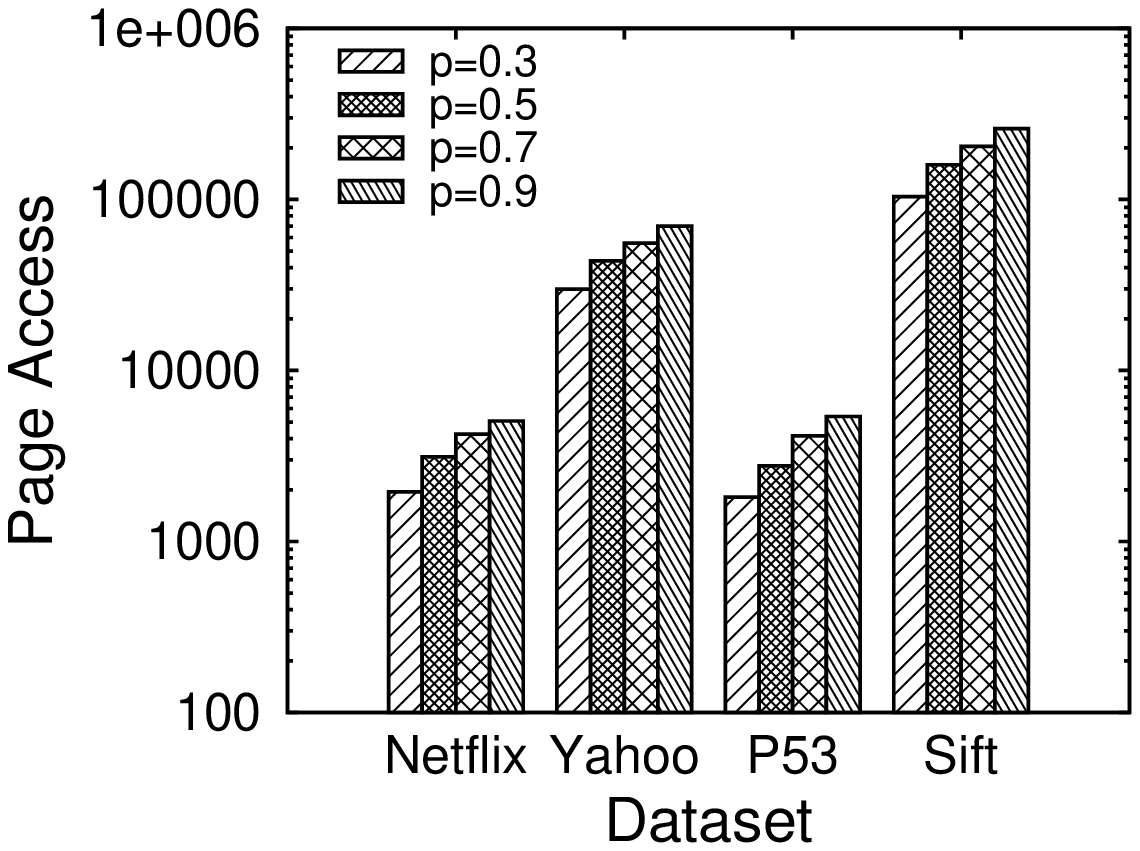}}
\caption{Impact of $p$}
\label{Imp_p}
\end{figure}

\section{Related Work}\label{sec:relatedwork}
In recent years, MIP search problem has received widespread attention and various types of methods have been proposed to solve both exact and approximate MIP search problems. In the beginning, some tree-based searching methods~\cite{DBLP:conf/kdd/RamG12, DBLP:conf/cikm/KoenigsteinRS12, DBLP:conf/sdm/CurtinGR13, DBLP:journals/sadm/CurtinR14} are presented for the exact MIP search problem. In addition, several methods based on linear search~\cite{DBLP:conf/sigmod/TeflioudiGM15, DBLP:conf/sigmod/LiCYM17, DBLP:journals/tods/TeflioudiG17} are also proposed. However, these methods suffer from the curse of dimensionality and their performances will degrade sharply when the feature dimension is high (more than 20)~\cite{DBLP:conf/kdd/HuangMFFT18, DBLP:conf/nips/YanLDCC18}.

To address the MIP search problem in high-dimensional space, there exists a line of research on approximate solutions by trading off the accuracy and efficiency. Since inner product doesn't satisfy some important metric properties such as non-negativity and triangle inequality, it's not a metric measurement. Existing methods for metric measurements~\cite{DBLP:journals/tkde/LiZSWLZL20, DBLP:journals/dase/LiL18, DBLP:conf/icde/ZhangOT04}, such as Locality Sensitive Hashing (LSH)~\cite{DBLP:journals/pvldb/HuangFZFN15} and some quantization-based methods~\cite{DBLP:journals/pvldb/WeiWWLZ0C20, DBLP:conf/sigmod/YangLFW20, DBLP:conf/cvpr/KalantidisA14}, can't be applied to MIP search problem. In addition, some methods proposed for a class of measurements such as Bregman distance~\cite{DBLP:journals/corr/abs-2006-00227} can't be employed. For this reason, most existing methods employ asymmetric (data points and query point are transformed in different manners) or symmetric (data points and query point are transformed in the same manner) transformations to convert a MIP search problem into a Nearest Neighbor (NN) search problem (called MIPS-NNS reduction) or a Maximum Cosine-similarity (MC) search problem (called MIPS-MCS reduction)~\cite{DBLP:conf/kdd/HuangMFFT18, DBLP:conf/nips/YanLDCC18, DBLP:conf/recsys/BachrachFGKKNP14, DBLP:conf/nips/Shrivastava014, DBLP:conf/icml/NeyshaburS15, DBLP:conf/uai/Shrivastava015}. Benefiting from these transformations, the order of MIP points can be preserved by the order of NN/MC points as much as possible, and the traditional metric search methods represented by LSH can be applied. These methods are considered as transformation-based methods and they are introduced as follows.

In L2-ALSH~\cite{DBLP:conf/nips/Shrivastava014} and Sign-ALSH~\cite{DBLP:conf/uai/Shrivastava015}, the MIP search problem is respectively converted into an NN search problem or an MC search problem by various asymmetric transformations, and the NN and MC search problems are solved by E2LSH~\cite{DBLP:conf/compgeom/DatarIIM04} and SimHash~\cite{DBLP:conf/stoc/Charikar02}, respectively. Nonetheless, they both introduce transformation errors affecting the accuracy. Besides, L2-ALSH leads to distortion errors after the transformation, which indicates that the Euclidean distance between most data points and the query point will be close to each other~\cite{DBLP:conf/kdd/HuangMFFT18}, and the efficiency will decrease. To avoid the transformation errors, an exact asymmetric transformation based solution named X-BOX is proposed. It takes advantage of the MIPS-NNS reduction and solves the NN search problem by PCA-tree, but its transformation also causes distortion errors. In addition to the aforementioned asymmetric solutions, Simple-LSH~\cite{DBLP:conf/icml/NeyshaburS15} employs a symmetric transformation for a MIPS-MCS reduction. Nevertheless, it suffers from long tails in the 2-norm distribution of real datasets~\cite{DBLP:conf/nips/YanLDCC18}.

Recently, two LSH-based methods, named H2-ALSH~\cite{DBLP:conf/kdd/HuangMFFT18} and Norm Ranging-LSH~\cite{DBLP:conf/nips/YanLDCC18} are devised. H2-ALSH proposes an asymmetric transformation without transformation errors named QNF transformation to convert MIP search problem into NN search problem. Furthermore, to reduce the distortion errors for the higher efficiency, a novel homocentric hypersphere partition strategy is designed. Norm-ranging LSH partitions the whole dataset into several subsets, where the searching process is performed by several independent indexes, to solve the excessive normalization problem caused by the long tails. Nevertheless, these methods require a large number of hash tables or long hash codes to ensure the accuracy, which takes up lots of pre-processing overheads. In this paper, we choose these two advanced methods as the benchmark methods.

There is also a plethora of data-dependent methods~\cite{DBLP:conf/iccv/ShenLZYS15, DBLP:conf/aaai/FraccaroPW16, DBLP:conf/ijcnn/KeivaniSR17, DBLP:conf/nips/YuHLD17, DBLP:conf/nips/MorozovB18, DBLP:conf/aaai/LiuWM19, DBLP:conf/aistats/DingYH19, DBLP:journals/corr/abs-1903-10391, DBLP:journals/corr/abs-1908-08656, DBLP:journals/corr/abs-1908-10396}, which are dedicated to the MIP search problem recently. These methods require learning-based techniques in the preprocess, which is difficult to maintain when large volumes of data are being updated. More importantly, they are not tailored to our concerned c-AMIP search problem with the probability guarantee in accuracy.

\section{Conclusion}\label{sec:conclusion}
In this paper, we address the important issue of $c$-AMIP search on high-dimensional and large-scale datasets by introducing an efficient method with a lightweight index. In our method, we employ 2-stable random projections to reduce the high-dimensional $c$-AMIP search problem to a low-dimensional search problem. With two derived searching conditions and the proposed Quick-Probe, our method can efficiently guarantee $c$-AMIP search in accuracy with arbitrary probabilities. In addition, to accelerate the searching process, we utilize the lightweight iDistance as the index to perform the range search in the low-dimensional space. Experimental results on four real datasets demonstrate that our method requires less pre-processing cost and provides $c$-AMIP results with a probability guarantee in accuracy efficiently.

\section*{Acknowledgment}
The work is supported by the National Key Research \& Development Program of China (No. 2018YFB1003400), the National Natural Science Foundation of China (Nos. 62072083 and U1811261) and Liaoning Revitalization Talents Program (XLYC1807158). Yang Song is supported by the Chinese Scholarship Council.

\bibliographystyle{abbrv}
\bibliography{reference}

\begin{thebibliography}{10}

\bibitem{DBLP:journals/corr/AuvolatV15}
A.~Auvolat and P.~Vincent.
\newblock Clustering is efficient for approximate maximum inner product search.
\newblock {\em CoRR}, abs/1507.05910, 2015.

\bibitem{DBLP:conf/recsys/BachrachFGKKNP14}
Y.~Bachrach, Y.~Finkelstein, R.~Gilad{-}Bachrach, L.~Katzir, N.~Koenigstein,
  N.~Nice, and U.~Paquet.
\newblock Speeding up the xbox recommender system using a euclidean
  transformation for inner-product spaces.
\newblock In {\em RecSys 2014}.

\bibitem{bennett2007netflix}
J.~Bennett, S.~Lanning, et~al.
\newblock The netflix prize.
\newblock In {\em Proceedings of KDD cup and workshop}, page~35, 2007.

\bibitem{DBLP:conf/stoc/Charikar02}
M.~Charikar.
\newblock Similarity estimation techniques from rounding algorithms.
\newblock In {\em {STOC} 2002}.

\bibitem{DBLP:conf/recsys/CremonesiKT10}
P.~Cremonesi, Y.~Koren, and R.~Turrin.
\newblock Performance of recommender algorithms on top-n recommendation tasks.
\newblock In {\em RecSys 2010}.

\bibitem{cremonesi2010performance}
P.~Cremonesi, Y.~Koren, and R.~Turrin.
\newblock Performance of recommender algorithms on top-n recommendation tasks.
\newblock In {\em {RecSys} 2010}.

\bibitem{DBLP:conf/sdm/CurtinGR13}
R.~R. Curtin, A.~G. Gray, and P.~Ram.
\newblock Fast exact max-kernel search.
\newblock In {\em {SIAM} 2013}.

\bibitem{DBLP:journals/sadm/CurtinR14}
R.~R. Curtin and P.~Ram.
\newblock Dual-tree fast exact max-kernel search.
\newblock {\em Statistical Analysis and Data Mining}, 7(4):229--253, 2014.

\bibitem{DBLP:conf/compgeom/DatarIIM04}
M.~Datar, N.~Immorlica, P.~Indyk, and V.~S. Mirrokni.
\newblock Locality-sensitive hashing scheme based on p-stable distributions.
\newblock In {\em {SoCG} 2004}.

\bibitem{DBLP:conf/cvpr/DeanRSSVY13}
T.~L. Dean, M.~A. Ruzon, M.~Segal, J.~Shlens, S.~Vijayanarasimhan, and
  J.~Yagnik.
\newblock Fast, accurate detection of 100, 000 object classes on a single
  machine.
\newblock In {\em {CVPR} 2013}.

\bibitem{DBLP:conf/aistats/DingYH19}
Q.~Ding, H.~Yu, and C.~Hsieh.
\newblock A fast sampling algorithm for maximum inner product search.
\newblock In {\em {AISTATS} 2019}.

\bibitem{dror2011yahoo}
G.~Dror, N.~Koenigstein, Y.~Koren, and M.~Weimer.
\newblock The yahoo! music dataset and kdd-cup'11.
\newblock In {\em Proceedings of the 2011 International Conference on KDD Cup},
  pages 3--18, 2011.

\bibitem{DBLP:conf/aaai/FraccaroPW16}
M.~Fraccaro, U.~Paquet, and O.~Winther.
\newblock Indexable probabilistic matrix factorization for maximum inner
  product search.
\newblock In {\em {AAAI} 2016}.

\bibitem{DBLP:journals/corr/abs-1908-10396}
R.~Guo, Q.~Geng, D.~Simcha, F.~Chern, S.~Kumar, and X.~Wu.
\newblock New loss functions for fast maximum inner product search.
\newblock {\em CoRR}, abs/1908.10396, 2019.

\bibitem{DBLP:conf/aistats/GuoKCS16}
R.~Guo, S.~Kumar, K.~Choromanski, and D.~Simcha.
\newblock Quantization based fast inner product search.
\newblock In {\em {AISTATS} 2016}.

\bibitem{DBLP:journals/pvldb/HuangFZFN15}
Q.~Huang, J.~Feng, Y.~Zhang, Q.~Fang, and W.~Ng.
\newblock Query-aware locality-sensitive hashing for approximate nearest
  neighbor search.
\newblock {\em {PVLDB}}, 2015.

\bibitem{DBLP:conf/kdd/HuangMFFT18}
Q.~Huang, G.~Ma, J.~Feng, Q.~Fang, and A.~K.~H. Tung.
\newblock Accurate and fast asymmetric locality-sensitive hashing scheme for
  maximum inner product search.
\newblock In {\em {KDD} 2018}.

\bibitem{DBLP:journals/tods/JagadishOTYZ05}
H.~V. Jagadish, B.~C. Ooi, K.~Tan, C.~Yu, and R.~Zhang.
\newblock idistance: An adaptive b\({}^{\mbox{+}}\)-tree based indexing method
  for nearest neighbor search.
\newblock {\em TODS}, 30(2):364--397, 2005.

\bibitem{DBLP:conf/cvpr/KalantidisA14}
Y.~Kalantidis and Y.~Avrithis.
\newblock Locally optimized product quantization for approximate nearest
  neighbor search.
\newblock In {\em {CVPR} 2014}.

\bibitem{DBLP:conf/ijcnn/KeivaniSR17}
O.~Keivani, K.~Sinha, and P.~Ram.
\newblock Improved maximum inner product search with better theoretical
  guarantees.
\newblock In {\em {IJCNN} 2017}.

\bibitem{DBLP:conf/cikm/KoenigsteinRS12}
N.~Koenigstein, P.~Ram, and Y.~Shavitt.
\newblock Efficient retrieval of recommendations in a matrix factorization
  framework.
\newblock In {\em CIKM 2012}.

\bibitem{DBLP:conf/sigmod/LiCYM17}
H.~Li, T.~N. Chan, M.~L. Yiu, and N.~Mamoulis.
\newblock {FEXIPRO:} fast and exact inner product retrieval in recommender
  systems.
\newblock In {\em {SIGMOD} 2017}.

\bibitem{DBLP:conf/sigmod/LiYZXCLNC18}
J.~Li, X.~Yan, J.~Zhang, A.~Xu, J.~Cheng, J.~Liu, K.~K.~W. Ng, and T.~Cheng.
\newblock A general and efficient querying method for learning to hash.
\newblock In {\em {SIGMOD} 2018}.

\bibitem{DBLP:journals/dase/LiL18}
K.~Li and G.~Li.
\newblock Approximate query processing: What is new and where to go? - {A}
  survey on approximate query processing.
\newblock {\em DSE}, 3(4):379--397, 2018.

\bibitem{DBLP:journals/tkde/LiZSWLZL20}
W.~Li, Y.~Zhang, Y.~Sun, W.~Wang, M.~Li, W.~Zhang, and X.~Lin.
\newblock Approximate nearest neighbor search on high dimensional data -
  experiments, analyses, and improvement.
\newblock {\em TKDE}, 32(8):1475--1488, 2020.

\bibitem{DBLP:conf/icdm/LiuCTWZ15}
R.~Liu, W.~Cheng, H.~Tong, W.~Wang, and X.~Zhang.
\newblock Robust multi-network clustering via joint cross-domain cluster
  alignment.
\newblock In {\em {ICDM} 2015}.

\bibitem{DBLP:conf/aaai/LiuWM19}
R.~Liu, T.~Wu, and B.~Mozafari.
\newblock A bandit approach to maximum inner product search.
\newblock In {\em {AAAI} 2019}.

\bibitem{DBLP:journals/corr/abs-1908-08656}
S.~S. Lorenzen and N.~Pham.
\newblock Revisiting wedge sampling for budgeted maximum inner product search.
\newblock {\em CoRR}, abs/1908.08656, 2019.

\bibitem{DBLP:conf/nips/MorozovB18}
S.~Morozov and A.~Babenko.
\newblock Non-metric similarity graphs for maximum inner product search.
\newblock In {\em {NeurIPS} 2018}.

\bibitem{DBLP:conf/icml/NeyshaburS15}
B.~Neyshabur and N.~Srebro.
\newblock On symmetric and asymmetric lshs for inner product search.
\newblock In {\em {ICML} 2015}.

\bibitem{DBLP:conf/soda/Panigrahy06}
R.~Panigrahy.
\newblock Entropy based nearest neighbor search in high dimensions.
\newblock In {\em {SODA} 2006}.

\bibitem{DBLP:conf/kdd/RamG12}
P.~Ram and A.~G. Gray.
\newblock Maximum inner-product search using cone trees.
\newblock In {\em {KDD} 2012}.

\bibitem{DBLP:conf/iccv/ShenLZYS15}
F.~Shen, W.~Liu, S.~Zhang, Y.~Yang, and H.~T. Shen.
\newblock Learning binary codes for maximum inner product search.
\newblock In {\em {ICCV} 2015}.

\bibitem{DBLP:conf/nips/Shrivastava014}
A.~Shrivastava and P.~Li.
\newblock Asymmetric {LSH} {(ALSH)} for sublinear time maximum inner product
  search {(MIPS)}.
\newblock In {\em {NIPS} 2014}.

\bibitem{DBLP:conf/uai/Shrivastava015}
A.~Shrivastava and P.~Li.
\newblock Improved asymmetric locality sensitive hashing {(ALSH)} for maximum
  inner product search {(MIPS)}.
\newblock In {\em {UAI} 2015}.

\bibitem{DBLP:journals/corr/abs-2006-00227}
Y.~Song, Y.~Gu, R.~Zhang, and G.~Yu.
\newblock Brepartition: Optimized high-dimensional knn search with bregman
  distances.
\newblock {\em CoRR}, abs/2006.00227, 2020.

\bibitem{DBLP:conf/kdd/SpringS17}
R.~Spring and A.~Shrivastava.
\newblock Scalable and sustainable deep learning via randomized hashing.
\newblock In {\em {KDD} 2017}.

\bibitem{DBLP:journals/pvldb/SunWQZL14}
Y.~Sun, W.~Wang, J.~Qin, Y.~Zhang, and X.~Lin.
\newblock {SRS:} solving c-approximate nearest neighbor queries in high
  dimensional euclidean space with a tiny index.
\newblock {\em {PVLDB}}, 2014.

\bibitem{DBLP:journals/tods/TeflioudiG17}
C.~Teflioudi and R.~Gemulla.
\newblock Exact and approximate maximum inner product search with {LEMP}.
\newblock {\em TODS}, 42(1):5:1--5:49, 2017.

\bibitem{DBLP:conf/sigmod/TeflioudiGM15}
C.~Teflioudi, R.~Gemulla, and O.~Mykytiuk.
\newblock {LEMP:} fast retrieval of large entries in a matrix product.
\newblock In {\em {SIGMOD} 2015}.

\bibitem{DBLP:journals/corr/VijayanarasimhanSMY14}
S.~Vijayanarasimhan, J.~Shlens, R.~Monga, and J.~Yagnik.
\newblock Deep networks with large output spaces.
\newblock In {\em {ICLR} 2015}.

\bibitem{DBLP:journals/pvldb/WeiWWLZ0C20}
C.~Wei, B.~Wu, S.~Wang, R.~Lou, C.~Zhan, F.~Li, and Y.~Cai.
\newblock Analyticdb-v: {A} hybrid analytical engine towards query fusion for
  structured and unstructured data.
\newblock {\em PVLDB}, 2020.

\bibitem{DBLP:journals/corr/abs-1903-10391}
X.~Wu, R.~Guo, S.~Kumar, and D.~Simcha.
\newblock Local orthogonal decomposition for maximum inner product search.
\newblock {\em CoRR}, abs/1903.10391, 2019.

\bibitem{DBLP:conf/nips/YanLDCC18}
X.~Yan, J.~Li, X.~Dai, H.~Chen, and J.~Cheng.
\newblock Norm-ranging {LSH} for maximum inner product search.
\newblock In {\em NeurIPS 2018}.

\bibitem{DBLP:conf/sigmod/YangLFW20}
W.~Yang, T.~Li, G.~Fang, and H.~Wei.
\newblock {PASE:} postgresql ultra-high-dimensional approximate nearest
  neighbor search extension.
\newblock In {\em {SIGMOD} 2020}.

\bibitem{DBLP:conf/nips/YuHLD17}
H.~Yu, C.~Hsieh, Q.~Lei, and I.~S. Dhillon.
\newblock A greedy approach for budgeted maximum inner product search.
\newblock In {\em {NIPS} 2017}.

\bibitem{zhang2011matrix}
F.~Zhang.
\newblock {\em Matrix theory: basic results and techniques}.
\newblock Springer Science \& Business Media, 2011.

\bibitem{DBLP:conf/icde/ZhangOT04}
R.~Zhang, B.~C. Ooi, and K.~Tan.
\newblock Making the pyramid technique robust to query types and workloads.
\newblock In {\em {ICDE} 2004}.

\end{thebibliography}

\end{document}